\newcommand*{\nats}{\mathbb{N}}
\newcommand*{\reals}{\mathbb{R}}
\newcommand*{\complexes}{\mathbb{C}}
\newcommand*{\map}[3]{{{#1}:{#2}\rightarrow{#3}}}
\newcommand*{\st}{\;\middle\vert\;}
\newcommand*{\set}[1]{{\left\{#1\right\}}}
\newcommand*{\binary}{\set{0,1}}
\newcommand*{\bits}[1]{\binary^{#1}}
\newcommand*{\tup}[1]{\langle{} #1 \rangle{}}
\newcommand*{\braket}[2]{\left\langle{#1}\middle\vert{#2}\right\rangle}
\newcommand*{\wt}{\textup{wt}}
\newcommand*{\ctrlX}[1]{{\textup{C}_{#1}X}}
\newcommand*{\ctrlZ}[1]{{\textup{C}_{#1}Z}}
\newcommand*{\cnot}{\textup{C-NOT}}
\renewcommand*{\p}{\varphi}
\newcommand*{\cH}{\mathcal{H}}
\newcommand*{\cP}{\mathcal{P}}
\newcommand*{\cZ}{\mathcal{Z}}
\newcommand*{\mat}[1]{{\left[\begin{matrix}#1\end{matrix}\right]}}
\newcommand{\1}{\mathbf{1}}
\newcommand{\disjunion}{\mathop{\stackrel{.}{\cup}}}
\newcommand{\matelt}[3]{{\langle{#1}|{#2}|{#3}\rangle}}
\theoremstyle{plain}
\newtheorem{thm}{Theorem}
\newtheorem{conj}[thm]{Conjecture}
\newtheorem{lemma}[thm]{Lemma}
\theoremstyle{definition}
\newtheorem{defn}[thm]{Definition}
\newtheorem{rmrk}[thm]{Remark}
\title{Depth-$2$ $\QAC$ circuits cannot simulate quantum parity}
\author{Daniel Pad\'e\hspace{0.625in}Stephen Fenner\\University of South Carolina\thanks{Computer Science and Engineering Department, Columbia, SC 29208 USA\@.   \texttt{djpade@gmail.com}, \texttt{fenner.sa@gmail.com}.  Part of the work was done while the first author visited the fourth author in June and July, 2019.} \and Daniel Grier\\IQC\thanks{Institute for Quantum Computing, University of Waterloo, Waterloo, ON N2L3G1 Canada. \texttt{daniel.grier@uwaterloo.ca}} \and Thomas Thierauf\\Aalen University\thanks{Department of Computer Science and Electrical Engineering, Aalen, Germany.  \texttt{thomas.thierauf@uni-ulm.de}. Supported by DFG grant TH~472/5-1.}}
\date{\today}
\begin{document}
\maketitle

\begin{abstract}
We show that the quantum parity gate on $n>3$ qubits cannot be cleanly simulated by a quantum circuit with two layers of arbitrary C-SIGN gates of any arity and arbitrary $1$-qubit unitary gates, regardless of the number of allowed ancilla qubits.  This is the best known and first nontrivial separation between the parity gate and circuits of this form.
The same bounds also apply to the quantum fanout gate.  Our results are incomparable with those of Fang~et~al.~\cite{FFGHZ:fanout}, which apply to any constant depth but require a sublinear number of ancilla qubits on the simulating circuit.

\medskip

\noindent\textbf{Keywords:} quantum circuit, QAC, QACC, parity gate, fanout gate
\end{abstract}

\section{Introduction}
\label{sec:intro}

Quantum decoherence is a major obstacle to maintaining long quantum computations.  Large-scale quantum computers, if and when they are built, will very likely confront short decoherence times and so must act quickly to do useful computations.

A reasonable theoretical model of such computations are shallow quantum circuits, i.e., quantum circuits of small depth.  The decoherence dilemma has inspired much theoretical interest in the capabilities of these circuits, particularly circuits that have constant depth and polynomial size.  To solve useful problems, quantum circuits that are very shallow will require gates acting on several qubits at once. A major question then is this: do there exist multiple-qubit gates that are both potentially realizable and sufficient for powerful computation in small (even constant) depth?

It is known that, with the aid of \emph{fanout} gates (a certain multiqubit gate defined below), quantum circuits can do a variety of important tasks such as phase estimation and approximate Quantum Fourier Transform in essentially constant depth~\cite{HS:fanout}.  Are fanout gates necessary here?  If one only allows gates to act on $O(1)$ qubits each, it is clear that any decision problem computed by $o(\log n)$-depth quantum circuits with bounded error and  can only depend on $2^{o(\log n)}$ bits of the input (see \cite{FFGHZ:fanout} for a discussion).  Thus without allowing \emph{some} class of quantum gates with unbounded width (arity), no nontrivial decision problem can be computed by such a circuit.
What if we restrict to constant-width quantum gates, but we allow measurement of several qubits at the end, followed by post-processing by a polynomial-size classical circuit?  Here the situation is more complicated.  For certain types of constant-depth circuits---particularly, for circuits with constant-width gates followed by a classical AND applied to the measured results of all the output qubits---one can compute in polynomial time the result, provided there is a wide enough gap in the probabilities of getting a $0$-result versus a $1$-result~\cite{FGHZ:constant-depth}.  In contrast, Bravyi, Gosset, \& K\"{o}nig recently presented a search problem\footnote{In a search problem (or relation problem) there may be several possible acceptable outputs, and the device is only required to produce one of them.} that can be computed exactly by a constant-depth circuit with constant-width gates, and no classical probabilistic circuit of sublogarithmic depth can solve the same problem with high probability~\cite{BGK:quantum-advantage}.

Another type of multiqubit gate that has a natural definition is the quantum AND-gate, which flips the value of a target just when all the control qubits are on.\footnote{These gates are also called \emph{generalized Toffoli gates}.}  It is not clear whether such a gate will be easy to implement, but it is a natural question to compare the power of fanout versus quantum AND-gates with respect to constant-depth quantum computation.

A quantum circuit (actually a family of such circuits, one for each input size) using unbounded quantum AND-gates and single-qubit gates is called a \emph{$\QAC$ circuit}.  This is the quantum analogue of a classical $\AC$ circuit.    Takahashi \& Tani showed that the quantum AND-gate can be simulated exactly in constant depth by a quantum circuit with single-qubit gates and fanout gates~\cite{TT:constant-depth-collapse}.  The converse of the Takahashi \& Tani result---can a fanout gate be simulated exactly (or even approximately) by a constant-depth $\QAC$ circuit?---is still an open question, and is the main focus of this paper.  We conjecture that the answer is no, and our current results supply evidence in that direction, proving a separation between fanout and depth-2 $\QAC$ circuits.  It is known that quantum fanout gates are constant-depth equivalent to quantum parity gates~\cite{Moore:fanout}, and so the question at hand is a reasonable quantum analogue to the already proven separation between parity and $\AC^0$ in classical circuit complexity~\cite{Ajtai:AC0,FSS:AC0} (the superscript $0$ signifies constant-depth circuits).  This analogy is not perfect; in classical circuit complexity, fanout is usually taken for granted and used freely, and this is not the case with quantum circuits.

\begin{conj}
Constant-depth $\QAC$ circuits cannot simulate an unbounded quantum fanout gate.
\end{conj}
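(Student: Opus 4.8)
The plan is to attack the conjecture by induction on the circuit depth $d$, lifting the depth-$2$ analysis via a quantum analogue of H\aa{}stad's random-restriction method. Since fanout and parity are constant-depth equivalent by Moore~\cite{Moore:fanout}, it suffices to rule out a clean simulation of the $n$-qubit fanout gate. I would first recast the problem in the Heisenberg picture: a clean simulation forces the single-qubit Pauli $X$ on the fanout control $c$ to conjugate, through the circuit, into the weight-$(n+1)$ operator $X_c X_1 \cdots X_n$ on the data qubits while acting trivially on the ancilla register both before and after. While single-qubit Clifford gates and C-NOTs map Paulis to Paulis, a generalized Toffoli gate spreads a single Pauli into a combination of many Paulis, so the guiding intuition is that each $\QAC$ layer can only broaden this Pauli ``spread'' in a controlled, low-degree way, and that $d$ constant layers cannot manufacture the full-weight, irreducible relation that fanout demands.

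The engine of the induction is a restriction on the data qubits. I would fix each data qubit independently to $|0\rangle$ or $|1\rangle$ in the computational basis with probability $1-p$ and leave it free with probability $p$, for a carefully tuned $p$. Under such a restriction a top-layer generalized Toffoli gate touching a qubit fixed to $|0\rangle$ on a control becomes the identity, and one whose controls are all fixed to $|1\rangle$ collapses to a bare $X$ on its target; only gates retaining a free control survive as genuine multiqubit gates. The heart of a \emph{quantum switching lemma} would be the statement that, with high probability over the restriction, the surviving top layer simplifies enough to be absorbed into the layer below it, turning a depth-$d$ circuit into an effectively depth-$(d-1)$ $\QAC$ circuit, while fanout restricted to the free data qubits is again a fanout gate on those qubits. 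Iterating $d$ times would collapse the circuit to single-qubit gates acting on a product of fixed data qubits together with the ancilla register, which manifestly cannot produce the long-range GHZ-type entanglement that fanout creates across the free qubits, provided enough of them survive; a standard second-moment argument keeps the number of free qubits superconstant as long as $p$ is not taken too small.

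The main obstacle, and the reason this remains a conjecture rather than a theorem, is that the restriction step interacts badly with both superposition and the unbounded ancilla register. A ``control'' qubit in a $\QAC$ circuit is generally in a superposition, so fixing it in the computational basis is not a faithful operation on the global state; and ancilla qubits are never restricted, so they can in principle route the global correlation through the shallow circuit even after every data qubit has been simplified away. Neutralizing the ancilla is exactly what the depth-$2$ bound accomplishes through an explicit, finite Pauli-operator analysis that does not scale to many layers. To push past depth $2$ one seemingly needs either (a)~a genuine quantum switching lemma taming generalized Toffoli gates under random restrictions \emph{uniformly in the number of ancilla}, or (b)~a Pauli-spectrum method for unitaries certifying that constant-depth propagation keeps the image of a single-qubit Pauli concentrated on bounded-degree Paulis, whereas $X_c X_1 \cdots X_n$ has degree $\Theta(n)$. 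I expect route~(a) to be the decisive difficulty, since the unrestricted ancilla register is precisely the resource that the depth-$2$ theorem must dispose of by hand and that no existing restriction argument controls once the depth exceeds the range of an explicit operator calculation.
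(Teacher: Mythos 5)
You are addressing Conjecture~1, which the paper itself leaves open: it proves only the depth-$2$ case (Theorem~\ref{thm:depth-2}), so there is no proof of record here, and your proposal is not a proof either---by your own admission its engine, the ``quantum switching lemma,'' is a hypothesis rather than a lemma. The gap is therefore the entire inductive step, and it is worth being concrete about why it fails. First, computational-basis restrictions are not a meaningful operation inside a $\QAC$ circuit: the single-qubit gates on layer~$0.5$ and between layers are arbitrary, so even on a purely classical input every qubit is in a generic superposition when it reaches layer~$d$; the events driving your collapse (``a control fixed to $\ket{0}$ kills the gate,'' ``all controls fixed to $\ket{1}$ collapse it to $X$'') simply do not occur at internal layers. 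Second, your route~(b) is dead on arrival: because $\ctrlZ{}$ gates have unbounded arity, conjugating a single-qubit $X$ by \emph{one} gate already destroys any per-layer degree bound---$\ctrlZ{2}X_1\ctrlZ{2} = X_1 Z_2$, and for $k>2$ conjugating $X_1$ by $\ctrlZ{k}$ yields $X_1$ tensored with a controlled-sign on the other $k-1$ qubits, an operator of full support that is not even a Pauli. A depth-$1$ circuit can thus spread a single-qubit Pauli to weight $n$; lightcone and low-degree-concentration arguments are exactly what unbounded width defeats. Third, as you correctly flag, nothing in the scheme controls the unrestricted, unbounded ancilla register.

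It is also worth correcting your guess about how the paper's depth-$2$ bound works, since the actual mechanism is the viable replacement for your restrictions. It is not a ``finite Pauli-operator analysis'': it combines the Entanglement Lemma (Lemma~\ref{lem:S-entangled})---a $\ctrlZ{S}$ gate that does not simplify must leave one side of it $S$-entangled, while cleanliness forces separability at both ends of the circuit, squeezing the admissible topology \emph{regardless of the number of ancillas}---with Lemma~\ref{lem:kill-parity}, which is the correct quantum analogue of a restriction: rather than fixing qubits in the computational basis, one commits a small cluster of input qubits to a \emph{pure-parity} state chosen so that its image under the arbitrary interleaved single-qubit unitaries has zero amplitude on $\ket{1\cdots 1}$ at several successive layers, turning off the $\ctrlZ{}$ gates it meets in a basis-independent way (Lemma~\ref{lem:kill-parity-2}). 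The paper explicitly remarks that classical random restrictions and switching lemmas ``are not necessarily needed or even relevant here,'' because fanout is free classically but is precisely what is at stake quantumly. If you want to push toward higher depth, the paper's own suggested program---using Lemma~\ref{lem:kill-parity} with $k$ layers (it works for any $k < 2^{n-1}$ unitaries) to turn off gates through multiple layers, plus the Entanglement Lemma to exclude circuit topologies---is the place to invest, not a computational-basis switching lemma.
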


Partial progress on this conjecture
was made in~\cite{FFGHZ:fanout}, where it was shown that no constant-depth $\QAC$ circuit family (a.k.a.\ a $\QAC^0$ circuit family) \emph{with a sublinear number of ancilla qubits} can approximate a fanout gate.  Since then, progress on this conjecture has stalled until very recently.  In 2014, E.~Pius announced a result (slightly weaker than our main result) that parity (equivalently, fanout) of more than five qubits cannot be simulated cleanly by a $\QAC$ circuit with depth~2~\cite{Pius:QAC}.\footnote{We ignore single-qubit gates in determining the depth of a circuit, counting only those layers containing multiqubit gates.}  We have been unable to verify his proof completely.  Nonetheless, some ideas in that paper have been helpful in a new push to prove the conjecture.  We have recently found new techniques, described below, that go beyond those used in \cite{FFGHZ:fanout} to potentially prove that $\QAC$ circuits of any constant depth cannot simulate fanout gates.  Proving this conjecture would most likely lead to a separation of the corresponding language classes computed by these circuits: $\QAC^0 \ne \QACC^0$.  Here, $\QACC^0$ circuits are families of constant-depth circuits with single-qubit gates and unbounded mod-$q$ gates (for any $q>1$ constant across the circuits in the family).  Parity gates were shown to be depth-$1$ equivalent to fanout gates~\cite{Moore:fanout}, so these circuits are layer-for-layer equivalent to circuits with fanout gates instead, and it was shown in~\cite{GHMP:QACC} that mod-$q$ gates are simulatable by $\QAC$ circuits with parity gates in constant depth, and vice versa.

The main result of this paper (Theorem~\ref{thm:depth-2}, below) is that an $n$-qubit parity gate for $n\ge 4$ cannot be simulated cleanly by any depth-2 $\QAC$ circuit.  This result is tight in the sense that one can simulate the $3$-qubit parity gate with a depth-2 $\QAC$ circuit.

%
%

\section{Preliminaries}
\label{sec:prelims}


Following standard practice, we let $[n]$ denote the set $\{1,\ldots,n\}$ for any integer $n \ge 0$.  We write $z^*$ for the complex conjugate of a complex number $z$, and we write $A^*$ for the adjoint (Hermitian conjugate) of an operator $A$ on a Hilbert space.  Otherwise, our notation is fairly standard (see \cite{KLM:quantum-book,KSV:quantum-book,NC:quantumbook} for example).

For $n\ge 0$ and $s \in \bits{n}$, we let $\wt(s)$ denote the Hamming weight of $s$, and we let $\oplus s\in\binary$ denote the parity of the bits of $s$, i.e., $\oplus s := \wt(s) \bmod 2$.

For $m\ge 0$, we let $\cH_m$ denote the Hilbert space on $m$ qubits, labeled $1,\ldots,m$.  Thus $\cH_m$ has dimension $2^m$, and is isomorphic to $\left(\complexes^{2}\right)^{\otimes m}$ via the usual computational basis.
If $S$ is some subset of $[m]$, then we let $\cH_S$ denote the Hilbert space of the qubits with labels in $S$.  Thus for example, $\cH_m = \cH_{[m]}$.  For disjoint $S,T\subseteq [m]$, there is a natural isomorphism $\cH_{S\cup T} \cong \cH_S \otimes \cH_T$, and so we will not distinguish between these.  For $S\subseteq [m]$, we let $\overline{S}$ denote $[m]\setminus S$.

Our quantum circuit model with unitary gates is standard, found in several textbooks, including~\cite{NC:quantumbook,KLM:quantum-book}.  We assume our circuit acts on $\cH_m$ for some $m\in\nats$.  We assume qubits $1,\ldots,n$ are the \emph{input qubits}, for some $n\le m$, and the rest are \emph{ancilla qubits}.  For any single-qubit unitary operator $U$, we let $U_i$ denote $U$ acting on qubit~$i$, for $1\le i\le m$.  (Note that $U_i$ acts on the entire space of $m$ qubits; it is the tensor product of $U$ with the identity operator $I$ acting on the rest of the qubits.)

All the quantum circuits circuits we consider are allowed arbitrary single-qubit gates.  These gates do not count toward the depth of the circuit; only layers of multiqubit gates are counted for the depth.  For example, a depth-$1$ circuit many have multiqubit gates acting on disjoint set of qubits simultanously (in a single layer), preceded and followed on each qubit with an arbitrary single-qubit gate.

The $1$-qubit Pauli gates are defined as usual:
\begin{align*}
X &:= \mat{0&1\\1&0}\;, & Y &:= \mat{0&-i\\i&0}\;, & Z &:= \mat{1&0\\0&-1}\;.
\end{align*}

The \emph{$k$-qubit fanout gate} $F_k$ acts on $k\ge 2$ qubits, where one qubit, the first, say, is the \emph{control} and the rest are targets:
\[ F_k\ket{x_1,x_2,\cdots,x_k} = \ket{x_1,x_1\oplus x_2,\cdots,x_1\oplus x_k} \]
for all $x_1,\ldots,x_k\in\{0,1\}$.  $F_k$ is equivalent to applying $k-1$ many $\cnot$ gates in succession, all with the same control qubit, and targets $2$ through $k$, respectively.  If the targets are initially all in the $\ket{0}$ state, then $F_k$ copies the classical value of the control qubit to each of the targets.\footnote{This does not violate the no-cloning theorem, because only the classical value is copied.}

The \emph{$k$-qubit parity gate} $\oplus_k$ acts on $k\ge 2$ qubits, where the first (say) is the target and the rest are control qubits:
\[ \oplus_k\ket{x_1,x_2,\ldots,x_k} = \ket{x_1\oplus \cdots\oplus x_k,x_2,\ldots,x_k} \]
for any $x_1,\ldots,x_k\in\{0,1\}$.  The parity gate $\oplus_k$ results from $F_k$ by conjugating each qubit with a Hadamard gate $H$~\cite{Moore:fanout}, that is,
\[ \oplus_k = (H_1 H_2\cdots H_k) F_k (H_1 H_2\cdots H_k) \]
and vice versa.

The \emph{$k$-qubit quantum AND-gate} (a.k.a.\ the generalized Toffoli gate) $\ctrlX{k}$ flips the value of the target (the first qubit, say) just when all control bits are $1$:
\[ \ctrlX{k}\ket{x_1,x_2,\ldots,x_k} = \ket{x_1\oplus (x_2\cdots x_k),x_2,\ldots,x_k} \]
for any $x_1,\ldots,x_k\in\{0,1\}$.  For example $\ctrlX{2} = F_2 = \cnot$.

The gates mentioned above are all ``classical'' in the sense that they map basis states to basis states.  This is not true of the C-SIGN gate.

The \emph{$k$-qubit C-SIGN gate} $\ctrlZ{k}$ flips the overall phase just when all bits are $1$:
\[ \ctrlZ{k}\ket{x_1,\ldots,x_k} = (-1)^{x_1\cdots x_k}\ket{x_1,\ldots,x_k} \]
for any $x_1,\ldots,x_k\in\{0,1\}$.  The C-SIGN gate results from the quantum AND-gate by conjugating the target qubit with Hadamard gates:
\[ \ctrlZ{k} = H_1\ctrlX{k} H_1 \]
and vice versa:
\[ \ctrlX{k} = H_1\ctrlZ{k} H_1\;. \]
A technical advantage of the C-SIGN gate over the quantum AND-gate is that the C-SIGN gate has no distinguished target or control qubits; all qubits incident to the gate are on the ``same footing;'' more precisely, the C-SIGN gate commutes with the SWAP operator applied to any pair of its qubits.  With that in mind we define, for any subset $S$ of the qubits of a multiqubit register, the gate $\ctrlZ{S}$ as the C-SIGN gate acting on the qubits in $S$.  Note, however, that $\ctrlZ{S}$ is a unitary operator on the entire register, being the tensor product of a C-SIGN gate on the qubits in $S$ with the identity operator on the other qubits.  We define $\ctrlZ{\emptyset} := -I$ by convention, where $I$ is the identity operator on the register.  We also refer to a C-SIGN gate acting on an unspecified set of qubits as a $\ctrlZ{}$ gate.

\begin{defn}
A \emph{$\QAC$ circuit} is a quantum circuit that includes $\ctrlZ{}$ gates and (arbitrary) single-qubit gates.  For $\QAC$ circuit $C$, we define the \emph{depth} of $C$ in the standard way, except we do not include single-qubit gates as contributing to the depth, i.e., as if all single-qubit gates are removed.
\end{defn}

\begin{defn}
If $G$ is an $n$-qubit unitary operator and $C$ is a quantum circuit on $m\ge n$ qubits, we say that \emph{$C$ cleanly simulates $G$} if, for all $x\in\{0,1\}^n$,
\[ C(\ket{x}\otimes\ket{0^{m-n}}) = (G\ket{x})\otimes\ket{0^{m-n}}\;. \]
\end{defn}

So particularly, when the ancilla qubits are initially all $0$, they are returned to being all $0$ at the end.

In this paper we prove the following theorem:

\begin{thm}\label{thm:depth-2}
No depth-$2$ $\QAC$ circuit cleanly simulates $\oplus_n$ for any $n\ge 4$, regardless of the number of its ancilla qubits.
\end{thm}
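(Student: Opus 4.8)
The plan is to put every depth-$2$ $\QAC$ circuit into the normal form $C = V L_2 W L_1 U$, where $U,W,V$ are layers of single-qubit gates and $L_1,L_2$ are the two layers of $\ctrlZ{}$ gates. Because the gates within one layer act on pairwise-disjoint sets of qubits, each $L_j$ is \emph{diagonal} in the computational basis with entries in $\{+1,-1\}$, each is an involution ($L_j^2=I$), and each qubit lies in at most one gate of each layer; write $A_q$ (resp.\ $B_q$) for the unique layer-$1$ (resp.\ layer-$2$) gate containing qubit $q$, when it exists. The first step is to exploit that $\oplus_n$ merely \emph{permutes} computational basis states, $\ket{x}\mapsto\ket{\oplus x,x_2,\ldots,x_n}$. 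Hence clean simulation forces $C(\ket{x}\otimes\ket{0^{m-n}})=\ket{\pi(x)}\otimes\ket{0^{m-n}}$ for the parity permutation $\pi$, a single basis state with coefficient $1$. Peeling off the final single-qubit layer, $L_2 W L_1 U(\ket{x}\otimes\ket{0^{m-n}}) = V^{*}(\ket{\pi(x)}\otimes\ket{0^{m-n}})$ is therefore a \emph{product state} for every input $x$.

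Second, I would package this as a rigidity statement about a diagonal--local--diagonal map. Set $\ket{\phi_x}:=U(\ket{x}\otimes\ket{0^{m-n}})$ and $\ket{\chi_y}:=V^{*}(\ket{y}\otimes\ket{0^{m-n}})$. As $x$ ranges over $\bits{n}$ the $\ket{\phi_x}$ form a full product ``grid'' (input qubit $i$ is in one of the two orthogonal states $U_i\ket{0},U_i\ket{1}$, independently across $i$, with the ancillas frozen), and likewise the $\ket{\chi_y}$; moreover $\pi$ is a bijection, so $L_2 W L_1$ carries the input grid onto the output grid. Using $L_2^2 = I$, this is equivalent to the family of identities $W L_1\ket{\phi_x} = L_2\ket{\chi_{\pi(x)}}$ for all $x$. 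The point of this reformulation is that the left-hand side is a product state acted on by the layer-$1$ hypergraph of $\ctrlZ{}$ gates followed by a purely \emph{local} $W$, while the right-hand side is a product state acted on only by the layer-$2$ gates.

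Third comes the entanglement bookkeeping. Since $W$ is local and the $\ctrlZ{}$ gates in a layer are disjoint, the bipartite entanglement of either side across any cut $(T,\overline T)$ is produced solely by the C-SIGN gates of that layer which straddle the cut. Equating the entanglement across the single-qubit cut $(\{t\},\overline{\{t\}})$ on the two sides of $W L_1\ket{\phi_x}=L_2\ket{\chi_{\pi(x)}}$, for every one of the $2^n$ grid states, forces a tight matching: the entanglement $\ctrlZ{A_t}$ creates between $t$ and the rest (seen through the local $W$) must equal that created by $\ctrlZ{B_t}$, simultaneously for both states $U_t\ket{0},U_t\ket{1}$ of qubit $t$ and for every configuration of the other qubits. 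I would use these constraints, together with the requirement that $C$ preserve the ancilla-zero subspace, to pin down the admissible overlap patterns of the sets $A_q$ and $B_q$. In particular, passing to the fanout picture (equivalent by Hadamard conjugation, \cite{Moore:fanout}, since the fanout gate is also a basis permutation so the product-grid reduction above applies verbatim), one gets the forward light-cone condition: the control's layer-$1$ gate together with the layer-$2$ gates it meets, namely $A_c\cup\bigcup_{q\in A_c}B_q$, must cover all $n-1$ targets.

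Finally, I would derive the contradiction for $n\ge4$. The control must then be fanned out to at least three targets, yet each target's output bit is required to depend on the control's input bit, so information about the control must reach three distinct target qubits through the C-SIGN couplings while leaving every single-qubit cut unentangled for all $2^n$ inputs. The heart of the argument is a rigidity lemma showing this is impossible: routing the control's value into a third target forces, for some grid input, layer-$1$ entanglement across a single-qubit cut that the lone layer-$2$ gate $\ctrlZ{B_t}$ at that target cannot cancel without corrupting the correctness of another target. For $n=3$ there are only two targets, the two required couplings can be placed on disjoint layers, and all constraints are simultaneously satisfiable---matching the known depth-$2$ construction---so the bound is tight. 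I expect the main obstacle to be precisely this last lemma: carrying out the case analysis on how the supports of the layer-$1$ and layer-$2$ gates overlap while the intervening single-qubit gates $W$ are completely arbitrary, so that one cannot simply reason by Pauli propagation and must instead control the full two-state behaviour of every qubit across the entire grid.
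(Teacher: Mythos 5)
Your setup is fine as far as it goes---the normal form $C = V L_2 W L_1 U$, the observation that clean simulation forces $W L_1\ket{\phi_x} = L_2\ket{\chi_{\pi(x)}}$ for a full product grid of inputs, and the conclusion that all entanglement created by layer~1 must be undone by layer~2---but the proof has a genuine gap exactly where you admit one: the ``rigidity lemma'' that is supposed to deliver the contradiction for $n\ge 4$ is asserted, not proved, and it is the entire content of the theorem. Moreover, the step feeding into it is also unsound as stated. Your ``entanglement bookkeeping'' assumes that a C-SIGN gate straddling a cut contributes entanglement that must be matched, but a $\ctrlZ{}$ gate can act \emph{trivially} on a given product state (it \emph{simplifies}: e.g.\ some qubit in its support is in state $\ket{0}$, or the gate reduces to one of smaller arity when a qubit is pinned to $\ket{1}$), in which case it creates no entanglement at all and imposes no constraint. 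Converting ``separable before and separable after'' into ``the gate simplifies'' is precisely the paper's Entanglement Lemma, whose proof requires a nontrivial test-string construction and algebraic case analysis (Appendices A and B); and ruling simplification \emph{out} requires the paper's separate analysis of non-mixing single-qubit gates (the pass-through/pass-in lemmas), which your proposal never addresses because $W$, $U$, $V$ are ``completely arbitrary.'' Consequently your ``forward light-cone condition'' is both unjustified and weaker than what is actually needed: the paper shows the layer-1 and layer-2 gates through the target act on \emph{exactly the same} set of qubits, which must include \emph{all} input qubits.

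The second missing ingredient is the mechanism that turns this structural fact into a contradiction. In the paper this is not an entanglement-routing argument at all, but a dimension count (Lemma~\ref{lem:kill-parity}): for any $k < 2^{n-1}$ unitaries $U_1,\ldots,U_k$ there is a \emph{pure-parity} state $\ket{\psi}$ with $\bra{1^n}U_i\cdots U_1\ket{\psi} = 0$ for all $i$. Applied with $n=3$, $k=2$ to three of the input qubits, this yields a parity-$0$ state of those three qubits that turns \emph{both} $\ctrlZ{}$ gates off, through the arbitrary intervening single-qubit gates. Since both gates act on all inputs and $n\ge 4$, a fourth input qubit is then completely disconnected from the target, contradicting the fact that parity depends on every input. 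Nothing in your proposal plays the role of this lemma; without it (or a genuine proof of your rigidity claim, which would have to confront the same difficulty of arbitrary $W$ that you flag), the argument does not close.
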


This result is tight in the sense that there is a simple $3$-qubit depth-$2$ $\QAC$ circuit that cleanly simulates $\oplus_3$.  Theorem~\ref{thm:depth-2} improves upon Pius's announced result above by reducing the number of input qubits.

To prove this theorem, we introduce a new technique that has promise for increasing the depth hypothesis well beyond~$2$.  Our technique makes use of a specific entangling property of the C-SIGN gate.  Roughly, any essential application of a C-SIGN gate leaves all its qubits entangled, provided they were not so entangled to begin with.  By ``essential'' we mean that the gate does not disappear or simplify to a gate of smaller arity.

\begin{defn}\label{def:S-separable}
Suppose we have an $n$-qubit register with qubits labeled $1,\ldots,n$.  Let $\ket{\psi}$ be some state of the $n$~qubits, and let $S$ be a subset of the qubits with $|S| \ge 2$.  We say that $\ket{\psi}$ is \emph{$S$-separable} if there exists a bipartition of $[n]$ into sets $A$ and $B$ such that $A\cap S \ne \emptyset$, $B\cap S\ne \emptyset$, and $\ket{\psi} = \ket{\psi}_A\otimes\ket{\psi}_B$ for two states $\ket{\psi}_A$ and $\ket{\psi}_B$ on the qubits in $A$ and in $B$, respectively.  If $\ket{\psi}$ is not $S$-separable, we say it is \emph{$S$-entangled}.
\end{defn}

\begin{defn}\label{def:simplify}
Suppose we have an $n$-qubit register with qubits labeled $1,\ldots,n$, a set $S\subseteq [n]$, and an $n$-qubit state $\ket{\psi}$.  We say that $\ctrlZ{S}$ \emph{simplifies} on $\ket{\psi}$ if either (a) $\ctrlZ{S}\ket{\psi} = \ket{\psi}$ or (b) $\ctrlZ{S}\ket{\psi} = \ctrlZ{T}\ket{\psi} \ne \ket{\psi}$ for some proper subset $T\subset S$.  In case (a), we say that \emph{$\ctrlZ{S}$ disappears (or is turned off) on $\ket{\psi}$}; in case (b), we say that \emph{$\ctrlZ{S}$ simplifies to $\ctrlZ{T}$ on $\ket{\psi}$}.
\end{defn}

Observe that the two cases (a) and (b) in Definition~\ref{def:simplify} above are mutually exclusive, given $S$ and $\ket{\psi}$.  Also observe that $\ctrlZ{S}$ disappears on $\ket{\psi}$ if and only if $\braket{x}{\psi} = 0$ for every computational basis state $\ket{x}$ such that the string $x$ has $1$'s in all positions in $S$.  $\ctrlZ{S}$ simplifies to $\ctrlZ{T}$ on $\ket{\psi}$ if and only if $\braket{x}{\psi} = 0$ for every computational basis state $\ket{x}$ where $x$ has a $0$ in some position in $S-T$; equivalently, $\ket{\psi}$ factors into a tensor product of a $\ket{1}$ state of each qubit in $S-T$, along with some arbitrary state of the rest of the qubits.

In Appendix~\ref{sec:S-entangled} we prove the following lemma:

\begin{lemma}[Entanglement Lemma]\label{lem:S-entangled}
Suppose we have an $n$-qubit register as in Definition~\ref{def:simplify}, and let $S$ be a subset of $[n]$.  Let $\ket{\psi}$ be any state of the register, and let $\ket{\p} := \ctrlZ{S}\ket{\psi}$.  Then at least one of the following must hold: (1) $\ket{\psi}$ is $S$-entangled; (2) $\ket{\p}$ is $S$-entangled; (3) $\ctrlZ{S}$ simplifies on $\ket{\psi}$.
\end{lemma}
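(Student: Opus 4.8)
The plan is to prove the contrapositive: assuming $\ket{\psi}$ is $S$-separable, $\ket{\p}$ is $S$-separable, and $\ctrlZ{S}$ does not simplify on $\ket{\psi}$, I would derive a contradiction. The heart of the argument is a single \emph{common-cut} computation: if there is one $S$-bipartition $(A',B')$ across which \emph{both} $\ket{\psi}$ and $\ket{\p}$ factor, then $\ctrlZ{S}$ must simplify. To see this, write $\ket{\psi}=\ket{u}_{A'}\otimes\ket{v}_{B'}$ and split each factor according to whether the $S$-qubits on its side are all $\ket{1}$: $\ket{u}=\ket{u_1}+\ket{u_0}$ and $\ket{v}=\ket{v_1}+\ket{v_0}$, where $\ket{u_1}$ is the $(S\cap A')$-all-ones component and $\ket{v_1}$ the $(S\cap B')$-all-ones component. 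Since $\ctrlZ{S}$ contributes a $-1$ exactly on the all-ones sector, $\ket{\p}=\ket{u_1}\otimes(\ket{v_0}-\ket{v_1})+\ket{u_0}\otimes(\ket{v_0}+\ket{v_1})$. Because $\ket{u_1}\perp\ket{u_0}$ and $\ket{v_0}\perp\ket{v_1}$, this factors across $(A',B')$ only if one of $\ket{u_0},\ket{u_1},\ket{v_0},\ket{v_1}$ vanishes, and each vanishing is one of the simplification conditions of Definition~\ref{def:simplify}: $\ket{u_1}=0$ or $\ket{v_1}=0$ makes $\ctrlZ{S}$ disappear, while $\ket{u_0}=0$ or $\ket{v_0}=0$ forces all of $S\cap A'$ or $S\cap B'$ to $\ket{1}$, so $\ctrlZ{S}$ simplifies to a proper subset. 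This contradicts the hypothesis.

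Given the common-cut computation, it remains to rule out that $\ket{\psi}$ and $\ket{\p}$ factor only across \emph{different} $S$-bipartitions. For this I would invoke the standard fact that every pure state has a unique finest tensor-product decomposition into fully entangled factors over a partition of the qubits, and that the state is a product across a bipartition iff that bipartition coarsens this partition; hence $\ket{\psi}$ is $S$-separable iff $S$ meets at least two blocks of its finest partition $\Pi(\psi)$. Because $\ctrlZ{S}$ acts only on the qubits in $S$, every block disjoint from $S$ is common to $\Pi(\psi)$ and $\Pi(\p)$ and factors out of both states, so I may restrict to the active region $C$, the union of the $S$-meeting blocks. This reduces to the case where $\ket{\psi}|_C=\bigotimes_k\ket{\psi_k}$ is a product of fully entangled factors on blocks $C_k$, each meeting $S$, with at least two blocks, and likewise for $\ket{\p}|_C$ with a possibly different block structure. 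The non-disappearing hypothesis forces the $(S\cap C_k)$-all-ones component of each $\ket{\psi_k}$ to be nonzero, and the no-forced-$\ket{1}$ hypothesis forces each $\ket{\psi_k}$ to have support with a $0$ in every position of $S\cap C_k$.

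To finish I would show $\ket{\p}|_C$ is not a product across any $S$-bipartition $(A',B')$ of $C$. The key tool is that $\ctrlZ{S}=I-2\,Q_{A'}\otimes Q_{B'}$, with $Q_{A'},Q_{B'}$ the local all-ones projectors on $S\cap A'$ and $S\cap B'$, has operator Schmidt rank at most $2$ across every cut, so Schmidt rank changes by at most a factor of $2$ under $\ctrlZ{S}$. If the cut does not separate $S$, then $\ctrlZ{S}$ is an invertible operator local to one side and preserves Schmidt rank; and any nontrivial such cut must split a block (a block containing a qubit on the $S$-free side also contributes an $S$-qubit to the other side), so $\ket{\p}|_C$ has rank $\ge 2$. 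If the cut separates $S$ and $\ket{\psi}|_C$ has rank $\ge 3$, the factor-$2$ bound already gives $\ket{\p}|_C$ rank $\ge 2$; and if $\ket{\psi}|_C$ has rank $1$, the common-cut computation applies. This leaves exactly one stubborn configuration: an $S$-separating cut that splits a single block $C_{k_0}$ with local Schmidt rank $2$ and respects all the others.

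The stubborn configuration is where I expect the real work. There $\ket{\psi}|_C=(\mu_1\ket{p_1}\ket{q_1}+\mu_2\ket{p_2}\ket{q_2})\otimes\ket{G}\otimes\ket{H}$, with $\ket{p_i},\ket{q_i}$ the Schmidt vectors of the split block and $\ket{G},\ket{H}$ the products of the unsplit blocks on the two sides. I would trace out $\ket{G}$ and $\ket{H}$ and show that the reduced operator on the two split-block pieces still carries nonzero cross-correlation, which is impossible for a state that factors across the cut; the non-disappearing and no-forced-$\ket{1}$ hypotheses are precisely what keep $Q_{A'}\ket{G}$, $Q_{B'}\ket{H}$, and the projected $\ket{p_i},\ket{q_i}$ from collapsing, so the correction term $-2\,Q_{A'}\otimes Q_{B'}\ket{\psi}|_C$ cannot be reabsorbed. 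Verifying that this correlation is nonzero is a finite but delicate linear-algebra calculation, and I expect it to be the main obstacle, being exactly the point where two incompatible factorizations must be shown inconsistent with an essential C-SIGN gate. A tempting shortcut worth trying is to pass to the reduced state $\rho_S$ on $S$ alone, where $S$-separability of $\ket{\psi}$ forces $\rho_S$ to be a product across the induced bipartition of $S$ and $\ctrlZ{S}\,\rho_S\,\ctrlZ{S}$ to be a product across another; proving two such product structures incompatible with non-degeneracy would reduce the whole lemma to a clean statement about $|S|$ qubits.
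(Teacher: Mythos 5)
Your proposal is not a complete proof: it reduces the lemma to what you yourself call the ``stubborn configuration'' and then stops, and that configuration is precisely the hard core of the statement. What you do establish is correct and even elegant. The common-cut computation---splitting each tensor factor into its all-ones and not-all-ones components and observing that $\ctrlZ{S}$ preserves the factorization only if one of $\ket{u_0},\ket{u_1},\ket{v_0},\ket{v_1}$ vanishes, each vanishing being a simplification in the sense of Definition~\ref{def:simplify}---is sound. The reduction via the unique finest factorization, plus the operator-Schmidt-rank bound coming from $\ctrlZ{S}=I-2\,Q_{A'}\otimes Q_{B'}$ (so Schmidt rank changes by at most a factor of $2$), correctly whittles the problem down to a single remaining case: an $S$-separating cut across which $\ket{\p}$ factors but which splits exactly one fully entangled block of $\ket{\psi}$, that block having local Schmidt rank exactly $2$. (Incidentally, your branch about cuts that do not separate $S$ is vacuous, since an $S$-bipartition separates $S$ by definition.) But for that last case you offer only a plan---trace out the unsplit blocks and show the remaining cross-correlation is nonzero---and you explicitly flag the verification as ``the main obstacle.'' A proof that defers its main obstacle is a gap, not a proof.

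The gap matters because the entire difficulty of the lemma lies in the two factorizations being across \emph{different} bipartitions; when they share a cut, the argument is a few lines, as your own computation shows. The paper's proof spends essentially all of its effort on the mismatched case: it takes arbitrary bipartitions $(A,B)$ for $\ket{\psi}$ and $(C,D)$ for $\ket{\p}$, introduces \emph{test strings} having a $0$ in each nonempty set among $S\cap A\cap C$, $S\cap A\cap D$, $S\cap B\cap C$, $S\cap B\cap D$, proves via explicit systems of amplitude equations (Lemmas~\ref{lem:4-sets}, \ref{lem:3-sets}, and~\ref{lem:2-sets}) that every test string has zero amplitude in $\ket{\psi}$, and then uses non-simplification to construct a test string with nonzero amplitude, a contradiction. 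Your stubborn configuration is exactly the situation those algebraic lemmas were written to kill, and nothing in your sketch substitutes for them; your ``tempting shortcut'' via $\rho_S$ is likewise unproven, and it would require a mixed-state analogue of the incompatibility statement, which is not obviously easier than the pure-state one. Until you carry out the calculation you postponed, your argument establishes the lemma only in the special case where $\ket{\psi}$ and $\ket{\p}$ admit a common separating cut.
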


%

\begin{defn}\label{def:parity-space}
Given $n\ge 1$ and $b\in\binary$, we define the subspace $\cP_b$ of $\cH_n$ to be the space spanned by $\set{\ket{x} \st x\in\bits{n} \wedge \oplus x = b}$.
\end{defn}

Clearly, $\dim\cP_0 = \dim\cP_1 = 2^{n-1}$, and $\cH_n$ is the direct sum of $\cP_0$ and $\cP_1$.

\begin{defn}[Parity of a State]
  Given an $n$-qubit state $\ket{\psi}\in\cH_n$ and $b\in\binary$,
we say that \emph{$\ket{\psi}$ has pure parity $b$} if $\ket{\psi}\in\cP_b$.  We say that $\ket{\psi}$ is a \emph{pure parity state} if $\ket{\psi}$ has pure parity $b$ for some $b\in\binary$.  Otherwise, we say that $\ket{\psi}$ has \emph{mixed parity}.
\end{defn}


\begin{defn}\label{def:weakly-compute}
Let $n \ge 1$.  A quantum circuit $C$ \emph{weakly computes $\parity_n$} if $C$ acts on $m$ qubits, for some $m\ge n$, and there exists state $\ket{\psi}\in\complexes^{2^{m-n}}$ such that, for any $x\in\bits{n}$, there exists state $\ket{\p_x}\in\complexes^{2^{m-1}}$ such that
  \begin{align*}
    C(\ket{x} \otimes \ket{\psi}) = \ket{\parity x} \otimes \ket{\p_x} \;.
  \end{align*}
\end{defn}

In the circuit $C$ above, we consider the first qubit to be both the target and an input qubit.  The $m-n$ non-input qubits are ancilla qubits.  Clearly, if a circuit cleanly simulates $\parity_n$, then it weakly computes $\parity_n$.

\begin{lemma}\label{lem:kill-parity}
Given any $n$-qubit unitary operators $U_1,\ldots,U_k$ for some $k < 2^{n-1}$ and any bit $b\in\binary$, there is an $n$-qubit state $\ket{\psi}$ with pure parity~$b$ such that $\bra{1^n} U_iU_{i-1}\cdots U_1 \ket{\psi} = 0$ for all $1\le i\le k$.
\end{lemma}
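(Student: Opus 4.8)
The plan is to observe that each of the $k$ requirements $\bra{1^n} U_iU_{i-1}\cdots U_1 \ket{\psi} = 0$ is a single homogeneous linear equation in the unknown state $\ket{\psi}$, and then to solve the resulting linear system inside the pure-parity subspace $\cP_b$ by a dimension count. First I would turn each requirement into an orthogonality condition. For $1 \le i \le k$ set $\ket{v_i} := U_1^* U_2^* \cdots U_i^* \ket{1^n}$, so that $\bra{v_i} = \bra{1^n} U_i \cdots U_1$ and the $i$-th requirement reads simply $\braket{v_i}{\psi} = 0$. Thus we seek a state $\ket{\psi}$ that lies in $\cP_b$ and is simultaneously orthogonal to the $k$ fixed vectors $\ket{v_1},\ldots,\ket{v_k}$.

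Next I would fold in the pure-parity constraint by working entirely within $\cP_b$. By the remark following Definition~\ref{def:parity-space}, $\dim\cP_b = 2^{n-1}$. Let $P_b$ be the orthogonal projection onto $\cP_b$; since $P_b$ is Hermitian and fixes every vector of $\cP_b$, for $\ket{\psi}\in\cP_b$ we have $\braket{v_i}{\psi} = \bra{v_i}P_b\ket{\psi} = \braket{P_b v_i}{\psi}$. Hence the $k$ conditions are equivalent to requiring that $\ket{\psi}$, regarded as a vector of $\cP_b$, be orthogonal within $\cP_b$ to each projected vector $P_b\ket{v_i}$. The span of $\set{P_b\ket{v_i} \st 1\le i\le k}$ has dimension at most $k$, so its orthogonal complement inside $\cP_b$ has dimension at least $2^{n-1}-k$. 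Because $k < 2^{n-1}$ by hypothesis, this complement is a nonzero subspace; any unit vector $\ket{\psi}$ drawn from it has pure parity $b$ and kills all $k$ inner products, as desired.

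I do not expect a serious obstacle: at bottom this is just the fact that $k$ homogeneous linear equations on a space of dimension $2^{n-1} > k$ must have a nontrivial common solution. The only point needing a little care is the passage to $\cP_b$ — one should check that restricting to $\cP_b$ genuinely reduces the ambient dimension to $2^{n-1}$ (so that the sharp hypothesis $k < 2^{n-1}$, rather than $k < 2^n$, is exactly what is used) and that projecting the $\ket{v_i}$ onto $\cP_b$ faithfully encodes the constraints on states that already live in $\cP_b$. Unitarity of the $U_i$ is not really needed beyond ensuring the vectors $\ket{v_i}$ are well defined.
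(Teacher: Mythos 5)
Your proposal is correct and takes essentially the same route as the paper: both reduce the $k$ requirements to orthogonality against the vectors $(U_i\cdots U_1)^*\ket{1^n}$ and finish with a dimension count against the $2^{n-1}$-dimensional subspace $\cP_b$. The paper phrases the count by intersecting the $k$ hyperplanes $\cZ_i$ (the orthogonal complements of those vectors) with $\cP_b$ via $\dim(\cP_b\cap\cZ)\ge\dim\cP_b+\dim\cZ-2^n\ge 2^{n-1}-k\ge 1$, whereas you project the constraint vectors into $\cP_b$ and take their orthogonal complement there; these are the same argument in dual form.
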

\begin{proof}
Let $\cP_0$ and $\cP_1$ be as in Definition~\ref{def:parity-space}.  For $1\le i\le k$, set $V_i := U_i\cdots U_1$, and let $\cZ_i\subseteq\cH_n$ be the $(2^n-1)$-dimensional subspace of $\cH_n$ spanned by $\set{V_i^*\ket{x}: x\in\binary^n\setminus \{1^n\}}$.  Then for all $i$, $\bra{1^n} V_i \ket{\psi} = 0$
for any state $\ket{\psi}\in\cZ_i$.  Letting $\cZ := \bigcap_{i=1}^k \cZ_i$, we see that $\dim(\cZ) \ge 2^n-k$.  For $b\in\binary$, we then have
\[ \dim(\cP_b\cap\cZ) = \dim\cP_b + \dim\cZ - \dim(\cP_b+\cZ) \ge \dim\cP_b + \dim\cZ - 2^n \ge 2^{n-1} + (2^n - k) - 2^n \ge 1. \]
It follows that we can choose a state (unit vector) $\ket{\psi}$ in $\cP_b\cap\cZ$, and this vector has the desired properties.
\end{proof}

\section{Lower Bounds}
\subsection{Depth-1 circuits}

\begin{lemma}\label{lem:single-layer}
There is no depth-$1$ $\QAC$ circuit that weakly computes $\parity_n$ for $n\ge 3$.
\end{lemma}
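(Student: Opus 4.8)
The plan is to peel off the single layer of C-SIGN gates and track the reduced state of the target qubit (qubit~$1$). Write the circuit as $C = W\, L\, V$, where $V$ and $W$ are layers of single-qubit gates and $L$ is the one layer of $\ctrlZ{}$ gates acting on pairwise-disjoint qubit sets. The crucial starting observation is that $V(\ket{x}\otimes\ket{\psi})$ is a product state across the input qubits: each input qubit $i$ carries a pure one-qubit state $V_i\ket{x_i}$ depending only on $x_i$, tensored with a fixed (possibly internally entangled) ancilla state. Since weakly computing parity outputs $\ket{\parity x}\otimes\ket{\p_x}$, qubit~$1$ is \emph{unentangled} from the rest at the end. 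Because every gate of $L$ not incident to qubit~$1$ and every single-qubit gate except $W_1$ leaves qubit~$1$'s reduced density matrix unchanged (a partial trace over the complement is invariant under unitaries on the complement), qubit~$1$ must already be in a pure reduced state immediately after the unique gate $\ctrlZ{S}$ that touches it.

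First I would dispose of the easy case in which no C-SIGN gate of $L$ touches qubit~$1$: then qubit~$1$'s final state is $W_1 V_1\ket{x_1}$, which depends only on $x_1$ and cannot equal $\ket{\parity x}$ once $n\ge 2$. Otherwise qubit~$1$ lies in a unique group $S$ with $|S|\ge 2$. Using $\ctrlZ{S} = I - 2\ket{1^{|S|}}\bra{1^{|S|}}$ together with the product structure above, I would compute qubit~$1$'s post-$\ctrlZ{S}$ reduced state explicitly. Writing $V_1\ket{x_1} = \mu\ket{0} + \lambda\ket{1}$ and letting $q$ be the probability that all qubits of $S\setminus\{1\}$ are simultaneously $1$, the computation yields a $2\times 2$ matrix whose diagonal is unchanged and whose off-diagonal is merely scaled by $(1-2q)$. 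Its determinant is $|\mu\lambda|^2\,4q(1-q)$, so purity forces $\mu\lambda = 0$ (qubit~$1$ is classical after $V_1$) or $q\in\{0,1\}$.

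The endgame is a case analysis on these alternatives. If $\mu\lambda = 0$ for both values of $x_1$, or if $q$ is forced to a constant $0$ or $1$, then $\ctrlZ{S}$ either leaves qubit~$1$ fixed or applies a uniform $Z$, so qubit~$1$'s output again depends only on $x_1$, contradicting parity for $n\ge 2$. In the remaining case qubit~$1$ is a genuine superposition; here I would first argue that every \emph{other} input qubit of $S_{\mathrm{in}} := S\cap[n]$ must be classical after its single-qubit gate (a genuine superposition on such a qubit would drive $q$ strictly inside $(0,1)$ for some control setting, breaking purity) and that the fixed ancilla contribution to $q$ must itself be $0$ or $1$. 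Consequently $q$ is the indicator that the controls of $S_{\mathrm{in}}\setminus\{1\}$ match one fixed pattern, so qubit~$1$ is phase-flipped on exactly a $2^{1-|S_{\mathrm{in}}|}$ fraction of the $2^{n-1}$ control settings (with $x_1$ fixed). But $\parity$ is balanced, flipping on exactly half of them; matching the counts forces $|S_{\mathrm{in}}| = 2$, whereupon qubit~$1$'s output depends on only two of the inputs and cannot track the parity once $n\ge 3$.

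I expect the bookkeeping of that last case to be the main obstacle: one must rule out genuine superpositions on the other control qubits of $S$, correctly account for the fixed ancilla factor $q_{\mathrm{anc}}\in[0,1]$ (which may be $0$, $1$, or strictly between), and verify that influence on qubit~$1$ can propagate only through $S$, since single-qubit gates and the disjoint C-SIGN gates off qubit~$1$ cannot carry information from the untouched controls. The count $2^{n-|S_{\mathrm{in}}|} = 2^{n-2}$ is exactly what pins $n\ge 3$ as the threshold, matching the fact that $\parity_2$ is a single $\cnot$ (up to exchanging control and target) and hence already a depth-$1$ circuit. As an alternative to the direct density-matrix computation, one could instead invoke the Entanglement Lemma to conclude that $\ctrlZ{S}$ must simplify on each pre-$L$ product state, arriving at the same structural dichotomy.
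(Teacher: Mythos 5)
Your proposal is correct --- the steps you flag as bookkeeping do all go through --- but it takes a genuinely different route from the paper's. The paper first argues that all input qubits must share the single $\ctrlZ{S}$ gate incident to the target (else some input cannot influence the target at all), then invokes Lemma~\ref{lem:kill-parity} with $k=1$: a dimension count inside the pure-parity subspace $\cP_b$ yields a \emph{pure-parity superposition} state on qubits $1$ and $2$ that turns $\ctrlZ{S}$ off, after which toggling qubit $3$ changes the parity of the input but cannot affect the target --- a three-line contradiction that relies on feeding the circuit non-classical inputs and using linearity. You instead feed the circuit only classical inputs and extract structure from the requirement that the target's reduced density matrix be pure at the output: your explicit computation (diagonal preserved, off-diagonal scaled by $1-2q$, determinant $4\,\abs{\mu\lambda}^2 q(1-q)$) forces either a non-mixing $V_1$ or deterministic controls, and the ensuing case analysis shows the layer-$1$ gate degenerates into a $Z$ on the target classically controlled by a fixed bit pattern, which cannot reproduce the balanced parity function once $n\ge 3$. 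What the paper's route buys is brevity and reusability: Lemma~\ref{lem:kill-parity} is exactly the tool reused (with $k=2$) in the depth-$2$ argument, so the depth-$1$ proof doubles as a warm-up for the main theorem. What your route buys is elementarity and a sharper structural conclusion: you need neither superposition inputs nor the kill-parity lemma, and your purity computation is in effect a self-contained proof of the special case of the Entanglement Lemma (Lemma~\ref{lem:S-entangled}) for product states --- precisely the ``weaker form'' the authors note in their Further Research subsection as adequate for the present results; your closing remark that one could instead cite that lemma confirms the connection. One simplification worth making: the count-matching step ($2^{n-\abs{S_{\mathrm{in}}}}=2^{n-2}$, hence $\abs{S_{\mathrm{in}}}=2$) is a detour, since once you know $\set{x \st q_x=1}$ is a parity class (with $x_1$ fixed), any input qubit outside $S$ gives the contradiction immediately: toggling it flips the parity class but leaves $q_x$ unchanged.
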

\begin{proof}
Consider such a circuit $C$ on at least three input qubits.  These must all be incident to a single $\ctrlZ{S}$ gate for some $S\supseteq\{1,2,3\}$, for otherwise there is a non-target input qubit that does not interact with the target qubit at all, whence $C$ cannot weakly compute $\parity_n$.  Then by Lemma~\ref{lem:kill-parity}, input qubits~$1$ and $2$ (including the target) can be initially committed to a pure-parity state $\ket{\psi}$ that turns off $\ctrlZ{S}$. Then given the initial state $\ket{\psi}\otimes\cdots$, input qubit~$3$ does not affect the target qubit.  This is a contradiction, because toggling qubit~$3$'s initial state between $\ket{0}$ and $\ket{1}$ while qubits~$1$ and $2$ are in state $\ket{\psi}$ changes the parity of the inputs and so must flip the value of the target on the output.
\end{proof}


\subsection{Depth-2 circuits}

A depth-$d$ circuit can have $d$ layers of $\ctrlZ{}$ gates, which we call \emph{layers~$1$ through $d$}, respectively, layer~$1$ lying to the left of layer~$2$, etc.  To the left, right, and in between these layers are arbitrary $1$-qubit gates.  Viewing the circuit as acting from left to right, the leftmost $1$-qubit gates are applied first; we say these gates are on layer~$0.5$.  Then the layer-$1$ $\ctrlZ{}$ gates are applied, followed by the $1$-qubit gates between layers~$1$ and $2$ (layer~$1.5$), followed by the $\ctrlZ{}$ gates on layer~$2$, and so on, then finally the rightmost layer of $1$-qubit gates (layer~$d+\frac{1}{2}$).

\begin{defn}
A single-qubit gate is \emph{mixing} if, in its matrix representation with respect to the computational basis, all entries are non-zero.
\end{defn}

Observe that a $1$-qubit unitary gate $U$ is mixing if and only if $U^*$ is mixing.

\begin{lemma}\label{lem:comp-to-comp}
Let $G$ be a single-qubit gate.  $G$ is non-mixing if and only if $G$ applied to any computational basis state outputs a computational basis state up to a phase, i.e., for any $b\in\binary$ there exist $c\in\binary$ and $\eta\in\reals$ such that $G \ket{b} = e^{i\eta}\ket{c}$.  Moreover, if this is the case, then either $G = e^{i\beta}\,e^{i\alpha Z}$ or $G = e^{i\beta}\,X\,e^{i\alpha Z}$ for some $\alpha,\beta\in\reals$.
\end{lemma}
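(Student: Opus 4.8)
The plan is to prove the two directions of the biconditional separately, then extract the normal-form statement from the non-mixing case. The forward direction---that non-mixing implies ``basis state to basis state up to phase''---is the conceptual heart, while the reverse direction is essentially a restatement of the definitions.

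First I would unpack the definition of non-mixing. A single-qubit gate $G$ is non-mixing precisely when at least one entry of its matrix $\mat{a&b\\c&d}$ with respect to the computational basis is zero. So I would argue by cases on which entry vanishes, using the fact that $G$ is unitary, i.e. its columns (and rows) are orthonormal. If the $(1,1)$ entry $a=0$, then the first column is $\mat{0\\c}$, and unitarity forces $|c|=1$, so $G\ket{0} = c\ket{1}$ is a basis state up to phase; orthonormality of the columns then forces the second column to be $\mat{b\\0}$ with $|b|=1$, so $G\ket{1}=b\ket{0}$ as well. Analogous reasoning handles each of the other three cases: whenever one off-diagonal entry is zero, the diagonal entries have modulus $1$ and $G$ acts as $\ket{b}\mapsto e^{i\eta_b}\ket{b}$; whenever one diagonal entry is zero, $G$ swaps the two basis states up to phases. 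In every case $G$ sends each computational basis state to a computational basis state times a phase, which is exactly the claimed condition.

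For the reverse direction I would argue the contrapositive: if $G$ is mixing (all four entries nonzero), then in particular $G\ket{0} = a\ket{0} + c\ket{1}$ with both $a,c\ne 0$, which is a genuine superposition and hence not a computational basis state up to any phase. This shows that failure of the basis-preservation property implies mixing, completing the equivalence.

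Finally, to obtain the normal form, I would assemble the two subcases of the non-mixing situation. In the ``phase'' subcase, $G = \mathrm{diag}(e^{i\eta_0}, e^{i\eta_1})$; writing $\eta_0 = \beta+\alpha$ and $\eta_1 = \beta-\alpha$ and recalling that $e^{i\alpha Z} = \mathrm{diag}(e^{i\alpha}, e^{-i\alpha})$, this is exactly $e^{i\beta}e^{i\alpha Z}$ for suitable real $\alpha,\beta$. In the ``swap'' subcase, $G = \mat{0&b\\c&0}$ with $|b|=|c|=1$; since $X = \mat{0&1\\1&0}$, I would factor $G = X\,\mathrm{diag}(c, b)$, and then absorb the remaining diagonal into the form $e^{i\beta}e^{i\alpha Z}$ as before, yielding $G = e^{i\beta}X\,e^{i\alpha Z}$. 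The only mildly delicate point is bookkeeping the global phase $\beta$ and the relative phase $\alpha$ so that the two diagonal entries are matched correctly; this is routine once one observes that any unimodular diagonal $2\times 2$ matrix can be written as $e^{i\beta}e^{i\alpha Z}$. I expect no real obstacle here---the main step is simply the careful case analysis on which matrix entry vanishes, all of which follows directly from unitarity.
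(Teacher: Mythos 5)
Your proposal is correct and follows essentially the same route as the paper's proof: both use unitarity (orthonormality of rows/columns) to show that a single zero entry forces the matrix to be either diagonal or anti-diagonal with unimodular entries, and both recover the normal form via the same phase bookkeeping $\alpha = (\theta-\phi)/2$, $\beta = (\theta+\phi)/2$. Your version merely makes explicit the case analysis and the contrapositive for the reverse direction, which the paper compresses into "due to the normalization of the rows and columns" and "obvious."
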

\begin{proof}
If $G$ is non-mixing, then due to the normalization of the rows and columns of any $2\times 2$ unitary matrix, $G$ can be written in one of these forms, for some $\theta,\phi\in\reals$:
  \begin{align*}
    \mat{e^{i\theta}&0\\0&e^{i\phi}}
            &= e^{i\beta}\,e^{i\alpha Z} &&\text{or} &
    \mat{0&e^{i\phi}\\e^{i\theta}&0} &= e^{i\beta}\,X\,e^{i\alpha Z}\;,
  \end{align*}
where $\alpha = (\theta-\phi)/2$ and $\beta = (\theta+\phi)/2$.  Applying either of these matrices to a computational basis state yields a computational basis state up to a phase.

The reverse implication is obvious.
\end{proof}

\begin{defn}
In a depth-$d$ $\QAC$ circuit, if a qubit $q$ encounters a non-mixing $1$-qubit gate in layer~$d+\frac{1}{2}$, then we say that $q$ is \emph{pass-through}.  If $q$ encounters a non-mixing $1$-qubit gate in layer $\frac{1}{2}$, then we say that $q$ is \emph{pass-in}.
\end{defn}

\begin{lemma}\label{lem:target-cant-pass}
For any $n\ge 1$ and $d\ge 2$, let $C$ be a depth-$d$ $\QAC$ circuit that weakly computes $\parity_n$.  If $C$'s target is either pass-through or does not encounter a $\ctrlZ{}$ gate on layer~$d$, then there exists a depth-$(d-1)$ $\QAC$ circuit that weakly computes $\parity_n$ with the same initial ancilla state as $C$.
\end{lemma}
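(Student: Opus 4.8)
The plan is to peel off the final $\ctrlZ{}$ layer and absorb its effect on the target into free single-qubit gates. Write $C = D\,L_d\,U$, where $D$ is the tensor product of all layer-$(d+\frac12)$ single-qubit gates, $L_d$ is the product of the layer-$d$ $\ctrlZ{}$ gates, and $U$ collects layers $\frac12$ through $d-\frac12$; thus $U$ is a depth-$(d-1)$ $\QAC$ circuit. Let $G_1$ be the gate $D$ applies to the target (qubit~$1$). My candidate is $C' := F_1\,U$, a single fixed single-qubit gate $F$ appended to the target after $U$; since a $1$-qubit gate does not change the depth, $C'$ has depth $d-1$ and keeps the ancilla state $\ket{\psi}$. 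Everything reduces to showing that $U(\ket{x}\otimes\ket{\psi})$ factors with the target decoupled and that one fixed $F$ maps the target factor to $\ket{\parity x}$. The common first move is to invert $D$: since $C(\ket{x}\otimes\ket{\psi}) = \ket{\parity x}\otimes\ket{\p_x}$ is a product with the target in $\ket{\parity x}$, and $D = G_1\otimes D'$ is a product of $1$-qubit gates,
\[ L_d\,U(\ket{x}\otimes\ket{\psi}) = (G_1^{-1}\ket{\parity x})\otimes\ket{\sigma_x}, \]
so at least the target is decoupled immediately after $L_d$.

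If the target meets no $\ctrlZ{}$ gate on layer~$d$, then $L_d = I_1\otimes L_d'$ leaves the target alone, so I pull $L_d^{-1}$ through qubit~$1$ untouched and obtain $U(\ket{x}\otimes\ket{\psi}) = (G_1^{-1}\ket{\parity x})\otimes\ket{\rho_x}$. Here the target factor is $G_1^{-1}\ket{\parity x}$, and choosing $F = G_1$ makes $C' = G_1\,U$ output $\ket{\parity x}$ on the target.

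If instead the target is pass-through, then $G_1$ is non-mixing, hence so is $G_1^{-1}$, and Lemma~\ref{lem:comp-to-comp} gives $G_1^{-1}\ket{\parity x} = e^{i\gamma_x}\ket{c_x}$ with $c_x = \parity x \oplus \epsilon$ for a fixed bit $\epsilon$ depending only on the form of $G_1$. Thus $L_d\,U(\ket{x}\otimes\ket{\psi})$ has the target in the definite basis value $c_x$, i.e.\ $\braket{z}{L_d\,U(\ket{x}\otimes\ket{\psi})} = 0$ whenever $z_1\ne c_x$. The crux is that $L_d$ is diagonal in the computational basis, being a product of $\ctrlZ{}$ gates, so it only multiplies amplitudes by unit-modulus phases and therefore preserves exactly which basis amplitudes vanish. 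Hence $\braket{z}{U(\ket{x}\otimes\ket{\psi})} = 0$ for every $z$ with $z_1\ne c_x$ as well, so $U(\ket{x}\otimes\ket{\psi}) = \ket{c_x}\otimes\ket{\rho_x}$ already has the target decoupled and sitting in the basis state $\ket{c_x}$. Taking $F = X^{\epsilon}$ gives $X^{\epsilon}\ket{c_x} = \ket{\parity x}$, so $C' = X_1^{\epsilon}\,U$ again outputs the parity on the target.

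In both cases $C'(\ket{x}\otimes\ket{\psi}) = \ket{\parity x}\otimes\ket{\rho_x}$, so $C'$ weakly computes $\parity_n$ with the same initial ancilla state $\ket{\psi}$ and has depth $d-1$, as required. I expect the diagonality argument in the pass-through case to be the main obstacle to state cleanly: it is what lets us ``see through'' the layer-$d$ gates and conclude the target was already unentangled and computational \emph{before} $L_d$ acted, even though $L_d$ may well be incident to the target. The point that must be handled carefully is that this yields a genuine tensor factorization of qubit~$1$ (not merely a pure marginal), which is what guarantees the shortened circuit still produces a clean product output matching Definition~\ref{def:weakly-compute}.
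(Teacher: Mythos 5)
Your proof is correct and takes essentially the same route as the paper's: invert the final layer of single-qubit gates, use Lemma~\ref{lem:comp-to-comp} together with the diagonality of $\ctrlZ{}$ gates to conclude that the target is already decoupled (and, in the pass-through case, in a definite computational basis state) before layer~$d$, then delete layer~$d$ and append a single fix-up gate to the target to obtain the depth-$(d-1)$ circuit. If anything, your explicit case split and the support-preservation argument spell out more carefully what the paper compresses into the statement that the layer-$d$ gate ``disappears or simplifies,'' but the construction and the key ideas are the same.
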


\begin{proof}
Fix an initial ancilla state $\ket{\psi}$ that witnesses $C$ weakly computing $\parity_n$.  By Lemma~\ref{lem:comp-to-comp}, for any classical input $x$ combined with $\ket{\psi}$, the target (qubit~$1$) is in an unentangled computational basis state $\ket{b}$ at layer~$d$ (where $b\in\{0,1\}$ depends on $x$).  Thus a layer-$d$ $\ctrlZ{}$ gate (if there is one) acting on the target either disappears or simplifies to a $\ctrlZ{}$ gate not acting on the target, depending on $b$.  In either case, the (unentangled) state of the target is unchanged across layer~$d$.  Let $C'$ be the depth-$(d-1)$ circuit obtained from $C$ by removing all gates on layer~$d$, removing all non-target gates on layer~$d+\frac{1}{2}$, and combining the target gate on layer~$d+\frac{1}{2}$ (if any) with the target gate on layer~$d-\frac{1}{2}$.  For any classical input, the final state of the target is thus the same with $C'$ as with $C$, and so $C'$ weakly computes $\parity_n$ with initial ancilla state $\ket{\psi}$.
\end{proof}

The following lemma is a corollary to Lemma~\ref{lem:target-cant-pass}.

\begin{lemma}\label{lem:target-cant-pass-2}
In any depth-$2$ $\QAC$ circuit weakly computing $\parity_n$ for $n\ge 3$, there is a $\ctrlZ{}$ gate on layer~$2$ acting on the target, and the target is not pass-through.
\end{lemma}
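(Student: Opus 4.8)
The plan is to derive this directly from Lemma~\ref{lem:target-cant-pass} by contradiction. Suppose $C$ is a depth-$2$ $\QAC$ circuit weakly computing $\parity_n$ for $n\ge 3$, and suppose the conclusion fails: either the target is pass-through, or no $\ctrlZ{}$ gate on layer~$2$ acts on the target. In either case, the hypothesis of Lemma~\ref{lem:target-cant-pass} (with $d=2$) is satisfied, so there exists a depth-$1$ $\QAC$ circuit $C'$ that weakly computes $\parity_n$ with the same initial ancilla state. But Lemma~\ref{lem:single-layer} says no depth-$1$ $\QAC$ circuit weakly computes $\parity_n$ for $n\ge 3$. This is a contradiction, so both parts of the conclusion must hold.

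First I would state the contrapositive application cleanly: the disjunction ``$C$'s target is either pass-through or does not encounter a $\ctrlZ{}$ gate on layer~$d$'' in Lemma~\ref{lem:target-cant-pass} is exactly the negation of the conjunction we want to prove (for $d=2$). So the two lemmas dovetail: Lemma~\ref{lem:target-cant-pass} produces a depth-$(d-1)$ circuit whenever the target ``escapes'' the final layer, and Lemma~\ref{lem:single-layer} forbids the resulting depth-$1$ circuit. I would make sure the arity bound $n\ge 3$ is the binding constraint, since that is what Lemma~\ref{lem:single-layer} requires; Lemma~\ref{lem:target-cant-pass} itself holds for all $n\ge 1$, so no additional restriction enters here.

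The only genuine subtlety is confirming that the failure of the conclusion exactly matches the hypothesis of Lemma~\ref{lem:target-cant-pass}. The conclusion to be proven is a conjunction of two statements: (i) some layer-$2$ $\ctrlZ{}$ gate acts on the target, and (ii) the target is not pass-through. Its negation is the disjunction: (i) no layer-$2$ $\ctrlZ{}$ gate acts on the target, \emph{or} (ii) the target is pass-through. This is precisely the trigger condition of Lemma~\ref{lem:target-cant-pass} with $d=2$, so the invocation is immediate with no case analysis required. The step I expect to be the only thing worth double-checking is that ``does not encounter a $\ctrlZ{}$ gate on layer~$d$'' and ``no layer-$2$ $\ctrlZ{}$ gate acts on the target'' denote the same event---they do, once one reads ``encounter'' as ``is incident to''---so the match is exact.

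I do not anticipate any real obstacle here; the content of the result lives entirely in the two prior lemmas, and this proof is a short syllogism chaining them. The proof is therefore essentially a one-line deduction, and I would write it as such rather than reproving anything.
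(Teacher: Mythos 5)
Your proposal is correct and matches the paper exactly: the paper's proof is the one-line citation ``By Lemmas~\ref{lem:single-layer} and \ref{lem:target-cant-pass},'' which is precisely the contradiction syllogism you spell out (negating the conjunction triggers Lemma~\ref{lem:target-cant-pass} with $d=2$, producing a depth-$1$ circuit that Lemma~\ref{lem:single-layer} forbids for $n\ge 3$). Your more explicit treatment of the negation and of which lemma imposes the $n\ge 3$ constraint is a faithful elaboration, not a different route.
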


\begin{proof}
By Lemmas~\ref{lem:single-layer} and \ref{lem:target-cant-pass}.
\end{proof}

\begin{lemma}\label{lem:shall-not-pass}
In a depth-$2$ $\QAC$ circuit $C$, if any non-target input qubit sharing a layer-$2$ $\ctrlZ{}$ gate with the target is pass-through, or if any non-target input qubit sharing a layer-$1$ $\ctrlZ{}$ gate with the target is pass-in, then $C$ cannot simulate $\parity_n$ cleanly for $n>3$.
\end{lemma}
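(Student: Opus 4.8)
The plan is to use a time-reversal symmetry to collapse the two hypotheses into one, then to strip the target's layer-2 interaction using the Entanglement Lemma, and finally to contradict Lemma~\ref{lem:single-layer}. First I would note that $C$ cleanly simulates $\oplus_n$ if and only if its adjoint $C^*$ does, since $\oplus_n$ is an involution and the parity gate permutes the computational basis strings. Moreover $C^*$ is again a depth-2 $\QAC$ circuit: each $\ctrlZ{}$ gate is self-adjoint, so $C^*$ has the same two $\ctrlZ{}$ layers in reverse order, with single-qubit layers the adjoints of those of $C$. Because $U$ is mixing if and only if $U^*$ is, a non-target input qubit that is pass-through and shares a layer-2 gate with the target in $C$ is pass-in and shares a layer-1 gate with the target in $C^*$. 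Hence the two hypotheses are exchanged by reversal, and it suffices to handle the pass-in/layer-1 case.

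So suppose a non-target input qubit $j$ is pass-in and shares a layer-1 gate $\ctrlZ{S_1}$ with the target, where $1,j\in S_1$, and assume toward a contradiction that $C$ cleanly (hence weakly) simulates $\oplus_n$. By Lemma~\ref{lem:target-cant-pass-2} the target meets a layer-2 gate $\ctrlZ{T_2}$ with $1\in T_2$ and is not pass-through. Since $j$ is pass-in, its first single-qubit gate is non-mixing (Lemma~\ref{lem:comp-to-comp}), so I can fix the input $x_j$ so that qubit $j$ sits in $\ket0$ just after the first single-qubit layer; this turns $\ctrlZ{S_1}$ off, and $\ctrlZ{S_1}$ is the target's only layer-1 gate. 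With it off, only single-qubit gates touch the target before $\ctrlZ{T_2}$, so for every basis input the state just before $\ctrlZ{T_2}$ factors as $\ket\tau\otimes\ket\rho$, where $\ket\tau=W\ket{x_1}$ is unentangled and depends only on $x_1$ ($W$ being the product of the target's first two single-qubit gates).

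Next I would apply Lemma~\ref{lem:S-entangled} with $S=T_2$. In a clean simulation the output is a product state with the target in the basis state $\ket{\oplus x}$, and the only gate acting on the target after $\ctrlZ{T_2}$ is a single-qubit gate, which cannot disentangle; hence the target is already unentangled immediately after $\ctrlZ{T_2}$. Using the bipartition $\{1\}$ versus its complement, each meeting $T_2$, both $\ket\tau\otimes\ket\rho$ and its image are $T_2$-separable, so options~(1) and~(2) of the Entanglement Lemma fail and $\ctrlZ{T_2}$ must simplify on $\ket\tau\otimes\ket\rho$ (Definition~\ref{def:simplify}). A short case check then shows that either $\ket\tau\in\{\ket0,\ket1\}$, or $\ket\rho$ is an eigenstate of $Q:=\ctrlZ{T_2\setminus\{1\}}$ and the target emerges as $Z^{s}\ket\tau$, where $s\in\binary$ records the eigenvalue $(-1)^s$ of $Q$ and depends only on the non-target inputs.

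Finally I would derive the contradiction. If $\ket\tau$ is a basis state (equivalently, $W$ is non-mixing), then the target's final value $G\ket\tau$ depends only on $x_1$; since $n\ge4$ leaves at least two other inputs, toggling one of them flips $\oplus x$ without changing the target, a contradiction. Otherwise $\ket\tau$ is never a basis state, and matching $G Z^{s}W\ket{x_1}$ (with $G$ the target's last single-qubit gate) to $\ket{\oplus x}$ forces $s$ to equal, up to a constant, the parity of the remaining non-target inputs. But $s$ is exactly the $Q$-eigenvalue bit produced by the single layer of $\ctrlZ{}$ gates acting on the non-target qubits, so this sub-circuit computes a parity in depth~$1$. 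When $|T_2\setminus\{1\}|=1$, $s$ is a single qubit's value and this is ruled out by the argument of Lemma~\ref{lem:single-layer}, even at $n=4$. The main obstacle, I expect, is the case $|T_2\setminus\{1\}|\ge2$, where $s$ is an unbounded \emph{conjunction} rather than a single bit, together with excluding the degenerate single-qubit coincidences (such as $W=G=H$); making the reduction to Lemma~\ref{lem:single-layer} go through uniformly is where the hypothesis $n>3$ must be spent.
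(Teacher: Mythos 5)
Your opening symmetry move is sound ($C$ cleanly simulates $\oplus_n$ if and only if $C^*$ does, and reversal swaps pass-in/layer-1 with pass-through/layer-2), but you collapse the two hypotheses onto the \emph{wrong} side, and that choice is exactly what creates the hole you then cannot close. The tractable side is the pass-through/layer-2 one: if a non-target input qubit $q$ sharing the target's layer-2 gate is pass-through, then cleanliness says $q$ ends in $\ket{x_q}$; pulling back through the non-mixing final gate (Lemma~\ref{lem:comp-to-comp}) shows $q$ sits in a computational basis state just after layer~2, hence also just before it, since $\ctrlZ{}$ gates are diagonal. Committing $x_q$ appropriately therefore puts $q$ in $\ket{0}$ across layer~2, turning the target's layer-2 gate off. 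The payoff is that layer~2 and all later non-target gates can then be deleted without changing the target's output, because the surviving layer-2 gates do not touch the target: the result is a genuine depth-1 circuit weakly computing $\parity_{n-1}$, and Lemma~\ref{lem:single-layer} forces $n-1\le 2$. Your side does not collapse this way: after you turn off the target's layer-1 gate, the non-target qubits still interact at layer~1 before feeding $\ctrlZ{T_2}$, so the residual circuit relevant to the target is still depth~2, and Lemma~\ref{lem:single-layer} has no purchase on it.

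The consequence is a genuine gap, which you flag yourself. Your dichotomy (either $W$ is non-mixing, or for every classical input the non-target state is a $\pm1$ eigenstate of $Q=\ctrlZ{T_2\setminus\{1\}}$ with the target picking up $Z^{s}$) is correct, and the non-mixing branch is handled correctly. But the remaining branch requires a new lemma: no depth-1 layer acting on the non-target qubits can, for every classical setting of the $n-2\ge2$ free inputs, steer the state into the $(-1)^{\text{parity}}$ eigenspace of $Q$ --- i.e., exactly the product state $\ket{1\cdots1}$ on $T_2\setminus\{1\}$ for one parity, and zero amplitude on $\ket{1\cdots1}$ for the other. This is not an instance of Lemma~\ref{lem:single-layer}: there the ``output'' is a designated \emph{input} qubit landing in $\ket{\oplus x}$, whereas here the output condition is membership in an eigenspace of an unbounded-arity $\ctrlZ{}$ gate whose qubits may all be ancillas, and the all-ones requirement is a conjunction, not a single bit. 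You concede this case ($|T_2\setminus\{1\}|\ge2$) is ``the main obstacle,'' and it is the crux of the argument, not a cleanup step. The repair is simply to run your symmetry in the other direction: given a pass-in qubit $j$ sharing a layer-1 gate with the target in $C$, pass to $C^*$, where $j$ is pass-through and shares a layer-2 gate with the target, and apply the collapse-to-depth-1 argument above. That is the paper's proof, and it needs nothing beyond Lemmas~\ref{lem:comp-to-comp} and~\ref{lem:single-layer} plus cleanliness.
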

\begin{proof}
Suppose such a $C$ cleanly simulates $\parity_n$ for some $n$, and first consider any non-target input qubit $q$ that is not pass-through but shares a layer-$2$ $\ctrlZ{}$ gate with the target.  By Lemma~\ref{lem:comp-to-comp}, the initial state of $q$ can be committed to either $\ket{0}$ or $\ket{1}$ such that $q$ is in state $\ket{0}$ across the layer-$2$ $\ctrlZ{}$ gate, turning that gate off.  Treating $q$ as an ancilla qubit, $C$ is now equivalent to a depth-$1$ circuit weakly computing $\parity_{n-1}$, which by Lemma~\ref{lem:single-layer} can only weakly compute parity on at most $2$ qubits.  Thus, $n\le 3$.

Since the parity gate is its own inverse, $C$ cleanly simulates parity if and only if the inverse $C^*$ of $C$ cleanly simulates parity.  Thus we can apply the whole argument of the last paragraph to the inverse of $C$---a ``mirror image'' argument---showing that if $q$ is not pass-in but shares a layer-$1$ $\ctrlZ{}$ gate with the target in circuit $C$, then $C^*$ cannot cleanly simulate $\parity_n$ for $n>3$, and thus neither can $C$.
\end{proof}


\begin{lemma}\label{lem:kill-parity-2}
Consider a depth-$2$ $\QAC$ circuit cleanly simulating $\parity_n$ for $n\ge 3$.  For any three input qubits $q_1$, $q_2$, and $q_3$ that share a common $\ctrlZ{}$ gate on both layers~$1$ and $2$ (possibly with other qubits), there exists a $3$-qubit pure-parity input state of $q_1,q_2,q_3$ that turns off both $\ctrlZ{}$ gates.
\end{lemma}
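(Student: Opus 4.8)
The plan is to reduce the claim to a single application of Lemma~\ref{lem:kill-parity} on the three-qubit register consisting of $q_1,q_2,q_3$. Write $\ctrlZ{S_1}$ for the common layer-$1$ gate and $\ctrlZ{S_2}$ for the common layer-$2$ gate, so that $\{q_1,q_2,q_3\}\subseteq S_1$ and $\{q_1,q_2,q_3\}\subseteq S_2$. Since the gates within a single layer act on pairwise disjoint qubit sets, $\ctrlZ{S_1}$ is the only layer-$1$ gate incident to any of $q_1,q_2,q_3$, and $\ctrlZ{S_2}$ is the only such layer-$2$ gate. Let $U_1$ denote the tensor product of the three layer-$0.5$ single-qubit gates acting on $q_1,q_2,q_3$, and let $U_2$ denote the tensor product of the three layer-$1.5$ single-qubit gates on those qubits; both are $3$-qubit unitaries on $\cH_{\{q_1,q_2,q_3\}}$.

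Next I would translate ``turning off'' each gate into a matrix-element constraint on the initial $3$-qubit state $\ket{\chi}$ of $q_1,q_2,q_3$. By the observation following Definition~\ref{def:simplify}, $\ctrlZ{S_1}$ disappears on the state entering layer~$1$ exactly when that state has no amplitude on basis vectors carrying $1$'s throughout $S_1$. Since $\{q_1,q_2,q_3\}\subseteq S_1$ and single-qubit gates preserve the tensor factorization across the $\{q_1,q_2,q_3\}$-versus-rest cut, a sufficient condition is that $U_1\ket{\chi}$ be orthogonal to $\ket{111}$, i.e.\ $\bra{111}U_1\ket{\chi}=0$. Granting this, $\ctrlZ{S_1}$ acts as the identity on the whole register, so the $3$-qubit state entering layer~$2$ is exactly $U_2U_1\ket{\chi}$ (the only operations affecting $q_1,q_2,q_3$ between the two layers are $U_1$, the turned-off gate, and $U_2$). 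Hence turning off $\ctrlZ{S_2}$ is guaranteed by $\bra{111}U_2U_1\ket{\chi}=0$.

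Finally I would invoke Lemma~\ref{lem:kill-parity} with $n=3$, operators $U_1,U_2$, and the two products $V_1=U_1$ and $V_2=U_2U_1$. Here $k=2<2^{3-1}=4$, so for any parity bit $b\in\binary$ the lemma supplies a pure-parity-$b$ state $\ket{\chi}$ with $\bra{111}V_1\ket{\chi}=\bra{111}V_2\ket{\chi}=0$, which is precisely the pair of constraints above. Committing $q_1,q_2,q_3$ to $\ket{\chi}$ then turns off both gates, as desired.

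The point requiring the most care is the sequential dependence between the two constraints: the expression $U_2U_1\ket{\chi}$ correctly describes the state entering layer~$2$ only once we know $\ctrlZ{S_1}$ has been turned off, which is the \emph{first} constraint. Since both constraints are homogeneous linear conditions on $\ket{\chi}$ and Lemma~\ref{lem:kill-parity} produces a single $\ket{\chi}$ meeting both at once, the reasoning is consistent: the first constraint makes layer~$1$ act trivially, and then the second---now correctly evaluated on the resulting state $U_2U_1\ket{\chi}$---makes layer~$2$ act trivially. I would also note that the slack in the dimension count ($k=2$ against the bound $4$) is exactly what lets us insist on pure parity while killing two gates simultaneously.
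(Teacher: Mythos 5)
Your proposal is correct and follows essentially the same route as the paper's own proof: both invoke Lemma~\ref{lem:kill-parity} with $n=3$, $k=2$ on the layer-$0.5$ and layer-$1.5$ single-qubit gates restricted to $q_1,q_2,q_3$, obtaining a pure-parity state $\ket{\chi}$ with $\bra{111}U_1\ket{\chi}=\bra{111}U_2U_1\ket{\chi}=0$, and then observe that the first condition turns off the layer-$1$ gate so that $U_2U_1\ket{\chi}$ is indeed the state entering layer~$2$. Your extra care about the sufficiency of orthogonality to $\ket{111}$ when the gates have larger arity, and about the sequential dependence of the two constraints, makes explicit what the paper leaves implicit, but the argument is the same.
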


\begin{proof}
Let $\ctrlZ{}_1$ and $\ctrlZ{}_2$ be the $\ctrlZ{}$ gates shared by $q_1,q_2,q_3$ on layers~$1$ and $2$, respectively.  We apply Lemma~\ref{lem:kill-parity} for $n=3$ and $k=2$, where $U_1$ is the tensor product of the three $1$-qubit gates on qubits $q_1$, $q_2$, and $q_3$ in layer $0.5$, and $U_2$ is the same except on layer $1.5$.  We have $2 = k < 4 = 2^{n-1}$, so by Lemma~\ref{lem:kill-parity} there exists a $3$-qubit state $\ket{\psi}$ on $q_1,q_2,q_3$ such that $\bra{111}U_1\ket{\psi} = \bra{111}U_2U_1\ket{\psi} = 0$.  We see that $\ket{\psi_1} := U_1\ket{\psi}$ is the state of the $3$ qubits just prior to layer~$1$, as depicted in Figure~\ref{fig:3-qubit-CZ}.
  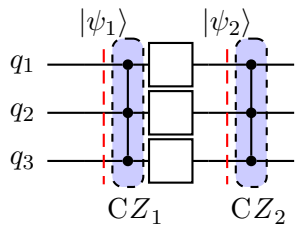
\begin{figure}[H]
    \centering
    \begin{tikzcd}
      \lstick{$q_{1}$} & \qw{}\slice{$\ket{\psi_{1}}$} & \ctrl{2}\gategroup[3,steps=1,style={dashed,rounded corners,inner xsep=0pt,fill=blue!20},label style={label position=below,anchor=north,yshift=-0.2cm,xshift=0.1cm},background]{$\ctrlZ{}_{1}$}  & \gate{} & \qw{} \slice{$\ket{\psi_{2}}$} & \ctrl{2}\gategroup[3,steps=1,style={dashed,rounded corners,inner xsep=0pt,fill=blue!20},label style={label position=below,anchor=north,yshift=-0.2cm,xshift=0.1cm},background]{$\ctrlZ{}_{2}$}   & \qw{} \\
      \lstick{$q_{2}$} & \qw{}                         & \control{} & \gate{} & \qw{} & \control{}  & \qw{} \\
      \lstick{$q_{3}$} & \qw{}                         & \control{} & \gate{} & \qw{} & \control{}  & \qw{}
    \end{tikzcd}
    \caption{Simplified depth-$2$ circuit, ignoring single-qubit gates on the input and output.}\label{fig:3-qubit-CZ}
  \end{figure}
Since $\tup{111|\psi_1} = 0$, the state $\ket{\psi_1}$ turns off the gate $\ctrlZ{}_1$, whence the state just prior to layer~$2$ is $\ket{\psi_2} := U_2U_1\ket{\psi}$.  Again by Lemma~\ref{lem:kill-parity}, $\ket{\psi_2}$ turns off $\ctrlZ{}_2$ on layer~$2$.
\end{proof}


We are now ready to prove our main result.

\begin{thm}
There is no depth-$2$ $\QAC$ circuit cleanly simulating $\parity_n$ for $n>3$. \end{thm}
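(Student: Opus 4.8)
The plan is to assume, for contradiction, that a depth-$2$ $\QAC$ circuit $C$ cleanly simulates $\oplus_n$ for some $n\ge 4$, and to first locate the gates incident to the target (qubit~$1$). By Lemma~\ref{lem:target-cant-pass-2} the target meets a layer-$2$ gate $\ctrlZ{S_2}$ and is not pass-through. Since $\oplus_n$ is its own inverse, $C^*$ also cleanly simulates $\oplus_n$; applying Lemma~\ref{lem:target-cant-pass-2} to $C^*$ (whose layers are those of $C$ reversed) shows the target also meets a layer-$1$ gate $\ctrlZ{S_1}$ and is not pass-in. Thus $1\in S_1\cap S_2$ with $|S_1|,|S_2|\ge 2$, and Lemma~\ref{lem:shall-not-pass} further guarantees that no non-target input in $S_2$ is pass-through and no non-target input in $S_1$ is pass-in.

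The first substantive step is a structural corollary of Lemma~\ref{lem:kill-parity-2}: no three input qubits may share both a common layer-$1$ gate and a common layer-$2$ gate. Given such a triple, Lemma~\ref{lem:kill-parity-2} supplies a pure-parity input state on the three qubits that turns both of their shared gates off; since each qubit meets at most one gate per layer, these are the only gates touching the triple, so after this commitment the three qubits evolve under single-qubit gates alone and decouple as a separate tensor factor from the rest of the computation. This contradicts clean simulation of $\oplus_n$: a decoupled input qubit cannot influence the target's output parity, even though toggling its value flips that parity. In particular $|S_1\cap S_2\cap[n]|\le 2$, so the target's two gates share at most one non-target input.

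The core of the argument is an entanglement-confinement analysis powered by Lemma~\ref{lem:S-entangled}. Clean simulation forces both the state just before layer~$1$ and the state just after layer~$2$ to be \emph{fully} product states; hence all entanglement present just before layer~$2$ is exactly what the layer-$1$ gates manufacture from a product state, and the target's entanglement block is contained in its own layer-$1$ gate $S_1$. I would then apply Lemma~\ref{lem:S-entangled} to $\ctrlZ{S_2}$: the post-gate state is not $S_2$-entangled (the target separates off at the output), and whenever $S_2\not\subseteq S_1$ the confinement of the target's block to $S_1$ exhibits a bipartition separating $S_2$, so the pre-gate state is not $S_2$-entangled either; the lemma then forces $\ctrlZ{S_2}$ to \emph{simplify} on the state, acting on the target only within $S_1$. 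The mirror-image statement forces $\ctrlZ{S_1}$ to act on the target only within $S_2$. Taken together, every essential interaction of the target is confined to $S_1\cap S_2$, whose intersection with $[n]$ has size at most $2$; the target's output can therefore depend on at most two input bits, far too few to compute $\oplus_n$ for $n\ge 4$.

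I expect the last step to be the main obstacle. The simplification granted by Lemma~\ref{lem:S-entangled} can take two forms---$\ctrlZ{S_2}$ disappearing, or reducing to a smaller $\ctrlZ{T}$---and which form occurs may depend on the computational values of the qubits in $S_2\setminus S_1$, which in turn can carry information about inputs lying outside $S_1$. I must rule out the possibility that the target covertly picks up dependence on such an input through a layer-$2$ gate that is effectively ``on'' for some settings and ``off'' for others, and I must separately dispose of the nested cases $S_1\subseteq S_2$ and $S_1=S_2$, where one direction of the confinement argument is vacuous. The intended remedy is to freeze the behaviour of the $S_2\setminus S_1$ qubits by committing them to fixed computational values (via the subspace construction behind Lemma~\ref{lem:kill-parity}), using the pass-in/pass-through restrictions of Lemma~\ref{lem:shall-not-pass} to keep these commitments consistent across both layers; this collapses the target's computation to an effectively depth-$1$ one on at most two inputs, which Lemma~\ref{lem:single-layer} forbids for $n\ge 3$.
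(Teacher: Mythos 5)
Your proposal assembles the same ingredients as the paper's proof---Lemma~\ref{lem:target-cant-pass-2} and its mirror to pin down the target's two gates $\ctrlZ{S_1}$ and $\ctrlZ{S_2}$, Lemma~\ref{lem:S-entangled} to compare their supports, and Lemma~\ref{lem:kill-parity-2} to finish---but the core entanglement step has a genuine gap, one you yourself flag as ``the main obstacle.'' The missing idea is that simplification of $\ctrlZ{S_2}$ never has to be accommodated at all: it can be \emph{ruled out}. Since no qubit of $S_2$ is pass-through and the simulation is clean, the state just after layer~$2$, restricted to $S_2$, is a tensor product of single-qubit states obtained by applying inverses of \emph{mixing} layer-$2.5$ gates to computational basis states, i.e., a product of nontrivial superpositions. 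Such a state has nonzero amplitude on the all-ones string of $S_2$ (so $\ctrlZ{S_2}$ cannot disappear, since the diagonal gate preserves amplitude moduli) and has no qubit of $S_2$ exactly in state $\ket{1}$ (so $\ctrlZ{S_2}$ cannot simplify to a proper subset). With alternative~(3) of Lemma~\ref{lem:S-entangled} eliminated, the lemma forces the state \emph{before} layer~$2$ to be $S_2$-entangled; since entanglement there is confined to layer-$1$ gate blocks, this gives $S_2\subseteq S_1$ outright, the mirror argument gives $S_1\subseteq S_2$, and connectivity of every input to the target forces $S_1=S_2\supseteq[n]$---which contradicts your own bound $|S_1\cap S_2\cap[n]|\le 2$ (equivalently, one finishes with the disconnected fourth input, as in your first step). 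Note also that making the mixing claim on \emph{all} of $S_2$ requires a piece you omit: an \emph{ancilla} qubit in $S_2$ could be pass-through, and this must be disposed of separately (if one is, then $\ctrlZ{S_2}$ either disappears on every classical input, collapsing $C$ to depth~$1$ and contradicting Lemma~\ref{lem:single-layer}, or simplifies to the same fixed smaller gate on every classical input, in which case it may simply be replaced by that smaller gate).

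By contrast, the remedy you propose for the simplification cases---freezing the qubits of $S_2\setminus S_1$ at computational values---cannot work as stated: those qubits may be ancillas, which are pinned to $\ket{0}$ and cannot be committed to anything; and even when they are inputs, the proper subset $T\subset S_2$ to which the gate simplifies need not be contained in $S_1$ and can vary with the classical input, so no fixed commitment confines the target's interaction to $S_1\cap S_2$. A smaller repair is also needed in your first step: the claim that \emph{no} three input qubits share both layer gates is justified by your toggling argument only when the triple contains the target; if the triple excludes it, the committed triple is a fixed decoupled factor, nothing in it can be toggled, and the residual circuit consistently computes the parity of the remaining inputs shifted by the committed parity $b$, so no contradiction arises. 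Fortunately the target always lies in $S_1\cap S_2$, so the bound $|S_1\cap S_2\cap[n]|\le 2$ that you actually use survives this correction.
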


\begin{proof}
Suppose $C$ is a depth-$2$ $\QAC$ circuit cleanly simulating $\parity_n$ for $n > 3$.  By Lemma~\ref{lem:target-cant-pass-2} applied to $C$ and Lemma~\ref{lem:shall-not-pass}, there must be a $\ctrlZ{}$ gate $G$ in layer~$2$ acting on the target, and none of the input qubits $G$ acts on are pass-through.  By Lemma~\ref{lem:target-cant-pass-2} applied to the inverse circuit $C^*$ (i.e., the mirror image argument) and Lemma~\ref{lem:shall-not-pass}, there is a $\ctrlZ{}$ gate $U$ in layer~$1$ acting on the target, and none of the input qubits $U$ acts on are pass-in.  Let $S$ be the set of qubits acted on by $G$ (so $G = \ctrlZ{S}$), and let $T$ be the set of qubits acted on by $U$, noting that both $S$ and $T$ include the target.  We can assume as well that none of the ancilla qubits in $S$ are pass-through; otherwise, either $G$ disappears for all classical inputs or $G$ simplifies to the same proper subset of $S$ for all classical inputs; in the former case, $C$ is equivalent to a depth-$1$ $\QAC$ circuit weakly computing $\parity_n$, which is impossible by Lemma~\ref{lem:single-layer}, and in the latter case, $G$ can be replaced with a $\ctrlZ{}$ gate of smaller arity that does not include the pass-through ancilla qubits (but still acts on the target) to obtain an equivalent circuit.  By a similar mirror argument, we can assume that none of the qubits in $T$ are pass-in.

By cleanliness and the fact that all $1$-qubit gates on layer~$2.5$ acting on gates in $S$ are mixing, for any classical input $x$, the state of the qubits in $S$ just after layer~$2$ is a tensor product of $1$-qubit states that are all nontrivial superpositions of computational basis states.  It follows that $G$ does not simplify on any layer~$2$ state arising from a classical input.

Now by the entanglement lemma (Lemma~\ref{lem:S-entangled}), on any classical input $x$, the state $\ket{\psi_x}$ just before layer~$2$ must be $S$-entangled, since the state after layer~$2$ is $S$-separable.  Since the single-qubit gates on layer~$1.5$ do not affect $S$-entanglement, the state $\ket{\p_x}$ just after layer~$1$ is also $S$-entangled.  Since the state immediately before layer~$1$ is clearly $S$-separable, it must be that all the qubits in $S$ must be acted upon by $U$.  This implies that $U$ must act on all input qubits; otherwise, there exists an input qubit that is acted upon neither by $U$ nor by $G$ and is thus not connected to the target at all.  We thus have that $S\subseteq T$ and $T$ includes all input qubits.  By the mirror argument, we get that $T\subset S$ as well; $U$ does not simplify for any classical input, because none of its qubits is pass-in, and so after $U$ is applied, the state is $T$-entangled and stays $T$-entangled up to layer~$2$, requiring $G$ to act on all the qubits in $T$ since there is no entanglement after layer~$2$.  Thus we have $S=T$, from which it follows that $G$ acts on all input qubits.

Finally, let $q_1$, $q_2$, and $q_3$ be any three input qubits, one of which is the target.  These three are all acted on by both $U$ and $G$.  Since $n>3$, there is at least one remaining (non-target) input qubit $q_4$.  By Lemma~\ref{lem:kill-parity-2}, there exists a pure parity-$0$ state $\ket{\psi}$ on $q_1,q_2,q_3$ that turns off both both $U$ and $G$.  With $q_1,q_2,q_3$ initially in this state, $q_4$ is not connected to the target, and thus cannot influence the final state of the target at all.  This contradicts the fact that the parity depends on all input qubits.
\end{proof}

\subsection{Further Research}

Our techniques currently work for depth~$2$, but obviously, we would like to prove limitations on $\QAC$ circuits of higher depth.  The entanglement lemma (Lemma~\ref{lem:S-entangled}) is stronger than needed for the current result; a weaker form, which assumes that $\ket{\psi}$ is factorable into single-qubit states, is easier to prove and still adequate for the current results.  We hope the stronger version will be useful for depth~$3$ and beyond, however.
Lemma~\ref{lem:kill-parity} is also stronger than needed for the current results; by committing clusters of input qubits to certain states, we can turn off C-SIGN gates through more than two layers.  These two lemmas provide powerful tools for dealing with $\QAC$ circuits of higher depth.  By simplifying a circuit in the right way, one can reduce its effective depth, and this in turn can lead to an inductive proof of the limitations of such circuits.

More specifically, Lemma~\ref{lem:S-entangled} may be useful for depth~$3$ and beyond because it disallows many different circuit topologies for $\QAC$ circuits computing parity.  For example, the following circuit topology is impossible for computing parity (or any classical reversible function for that matter) cleanly unless the middle gate simplifies:
\begin{center}
\begin{picture}(0,0)%
\includegraphics{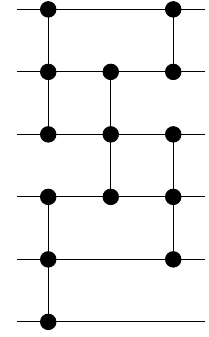}%
\end{picture}%
\setlength{\unitlength}{1973sp}%
\begingroup\makeatletter\ifx\SetFigFont\undefined%
\gdef\SetFigFont#1#2#3#4#5{%
  \reset@font\fontsize{#1}{#2pt}%
  \fontfamily{#3}\fontseries{#4}\fontshape{#5}%
  \selectfont}%
\fi\endgroup%
\begin{picture}(1977,3228)(2836,-4405)
\put(2851,-4336){\makebox(0,0)[rb]{\smash{{\SetFigFont{6}{7.2}{\rmdefault}{\mddefault}{\updefault}$6$}}}}
\put(2851,-1336){\makebox(0,0)[rb]{\smash{{\SetFigFont{6}{7.2}{\rmdefault}{\mddefault}{\updefault}$1$}}}}
\put(2851,-1936){\makebox(0,0)[rb]{\smash{{\SetFigFont{6}{7.2}{\rmdefault}{\mddefault}{\updefault}$2$}}}}
\put(2851,-2536){\makebox(0,0)[rb]{\smash{{\SetFigFont{6}{7.2}{\rmdefault}{\mddefault}{\updefault}$3$}}}}
\put(2851,-3136){\makebox(0,0)[rb]{\smash{{\SetFigFont{6}{7.2}{\rmdefault}{\mddefault}{\updefault}$4$}}}}
\put(2851,-3736){\makebox(0,0)[rb]{\smash{{\SetFigFont{6}{7.2}{\rmdefault}{\mddefault}{\updefault}$5$}}}}
\end{picture}%

\end{center}
(Here only the C-SIGN gates are shown; the single qubit gates are suppressed.)
The reason is that, for any classical input, the state on the far left is completely separable, and so the state immediately after the first layer is $\{2,3,4\}$-separable (via the partition $A = \{1,2,3\}$ and $B = \{4,5,6\}$).  If the middle gate does not simplify, then by the lemma, the state $\ket{\psi}$ immediately to its right must be $\{2,3,4\}$-entangled.  Now assuming a clean simulation, the state on the far right is completely separable, and so running the circuit backwards from the right, we see that $\ket{\psi}$ must be $\{2,3,4\}$-separable (via the partition $A = \{1,2\}$ and $B = \{3,4,5,6\}$).  Noting that single-qubit gates do not affect the $S$-separability of any state, this is a contradiction.

We note that the techniques used to prove that parity cannot be computed by classical $\AC^0$ circuits (i.e., random restrictions and switching lemmas) are not necessarily needed or even relevant here, because fanout is taken for granted in the classical case, unlike in the quantum case.

Finally, we only consider exact simulations in this paper.  A natural question to ask is whether one can prove nonapproximability results as well.  We suspect some of our techniques---e.g., the entanglement lemma---can be strengthened to help with some of these results, but new techniques will certainly also be needed.

\bibliographystyle{plain}
\bibliography{../../research/bib/master}

\newpage

\appendix

\section{Proof of the Entanglement Lemma}
\label{sec:S-entangled}

Here we prove the entanglement lemma (Lemma~\ref{lem:S-entangled}) in a slightly more general context.

Recall that the C-SIGN gate on $k$ qubits is a unitary operator $\ctrlZ{}$ defined thus for every computational basis state $\ket{x_1x_2\cdots x_k}$, for $x_1,x_2,\ldots,x_k\in\{0,1\}$:
\[ \ctrlZ{}\ket{x_1x_2\cdots x_k} = (-1)^{x_1\cdots x_k}\ket{x_1x_2\cdots x_k}\;. \]
Generalizing this definition a bit, for any $\eta\in\complexes$ such that $|\eta| = 1$ and $\eta \ne 1$, we define the unitary gate $G_\eta$ by
\[ G_\eta\ket{x_1x_2\cdots x_k} = \eta^{x_1\cdots x_k}\ket{x_1x_2\cdots x_k}\;. \]
$G_\eta$ is represented in the computational basis by a diagonal matrix, and it has two eigenspaces: the one-dimensional subspace $E :=\{a\ket{1\cdots 1} : a\in\complexes\}$ with eigenvalue $\eta$, and its orthogonal complement $E^\perp$ with eigenvalue $1$.  $E^\perp$ is spanned by those basis vectors with at least one $0$ in their corresponding strings.  Note that $G_\eta$ is unitary, that $G_\eta^* = G_{\eta^*}$, and that $G_\eta$ commutes with the swap operator on any pair of its qubits.

We now fix $\cH := \cH_n$ to be the $n$-qubit Hilbert space, for some $n > 0$.  We let the qubits of $\cH$ have indices from $1$ to $n$.  The computational basis of $\cH$ is thus $\left\{ \ket{x} \mid \map{x}{[n]}{\{0,1\}}\right\}$, indexed by binary strings of length $n$.  Recall that for any fixed subset $S\subseteq [n]$, we let $\cH_S$ denote the Hilbert space of the qubits in $S$ (or, more strictly speaking, the qubits whose indices are in $S$).  So for example, $\cH \cong \cH_S \otimes \cH_{\overline S}$, where we write $\overline S$ for $[n]\setminus S$.  Similarly, if $\map{x}{[n]}{\{0,1\}}$ is any length-$n$ binary string, we let $x_{|S}$ denote the restriction of $x$ to $S$, and for $i\in [n]$ we write $x_i$ for $x_{|\{i\}}$.  We use the term, ``string'' to refer generally to $0,1$-valued maps whose domains are arbitrary subsets of $[n]$.  If we do not specify the domain of a string, we assume it is $[n]$.

We let $\1$ denote the string of $n$ many $1$'s, i.e., the constant $1$-valued string with domain $[n]$.

If $\map{y}{J}{\{0,1\}}$ and $\map{z}{K}{\{0,1\}}$ are strings for disjoint sets $J,K\subseteq [n]$, then we write $y\cup z$ for the unique string with domain $J\cup K$ extending $y$ and $z$.  So in particular, for any $S\subseteq [n]$, if $\ket{y}$ is a computational basis state of $\cH_S$ and $\ket{z}$ is a computational basis state of $\cH_{\overline S}$, then $\ket{y\cup z}$ is the computational basis state of $\cH$ corresponding to $\ket{y}\otimes\ket{z}$.


%

We now fix for the entire sequel some arbitrary $\eta\in\complexes$ such that $|\eta|=1$ and $\eta\ne 1$.

\begin{defn}
For any set $S\subseteq [n]$, let $G_S$ be the $G_\eta$ gate applied to the qubits in $S$ (which means that $G_S$ is an operator on $\cH_S$).  Similarly, let $I_S$ be the identity operator applied to the qubits in $S$.  (If $S=\emptyset$, then $\cH_S$ has dimension~$1$ and we define $G_S := \eta I_S$ by convention.)
\end{defn}

The next definition essentially repeats Definition~\ref{def:simplify} but for $G_\eta$ gates.

\begin{defn}\label{def:simplify-2}
Let $S\subseteq [n]$ be any set, and let $G := G_S\otimes I_{\overline{S}}$ (so $G$ is an operator on $\cH$).  Let $\ket{\psi}\in\cH$ be some unit vector.    We say that $G$ \emph{simplifies} on $\ket{\psi}$ if either (a) $G\ket{\psi} = \ket{\psi}$ or (b) $G\ket{\psi} = (G_T\otimes I_{\overline{T}})\ket{\psi} \ne \ket{\psi}$ for some proper subset $T\subset S$.  In case (a), we say that \emph{$G$ disappears (or is turned off) on $\ket{\psi}$}; in case (b), we say that \emph{$G$ simplifies to $T$ on $\ket{\psi}$}.
\end{defn}

As with Definition~\ref{def:simplify}, there are two ways that $G$ can simplify on $\ket{\psi}$: either (case~(a)) $\braket{x}{\psi} = 0$ for every string $x$ such that $x_{|S} = \1_{|S}$ (whence $G\ket{\psi} = \ket{\psi}$), or (case~(b)) there exists $i\in S$ such that, for all strings $x$ with $x_i = 0$, we have $\braket{x}{\psi} = 0$.  In case~(a), $\ket{\psi}$ is an eigenvector of $G$ with eigenvalue $1$ and so $G\ket{\psi} = \ket{\psi}$; every computational basis vector appearing in the expansion of $\ket{\psi}$ (as a linear combination of computational basis vectors) has a $0$ somewhere in $S$.  These $0$'s turn off $G$.  In case~(b), $G\ket{\psi} = (G_{S\setminus\{i\}} \otimes I_{\overline S \cup \{i\}})\ket{\psi}$, that is, $G$ acts the same as a smaller $G_\eta$ gate applied to all qubits in $S$ except the $i^\text{th}$.  This can only happen if $\ket{\psi} = \ket{1}_{\{i\}} \otimes \ket{\psi'}$, where $\ket{1}_{\{i\}}\in\cH_{\{i\}}$ and $\ket{\psi'}$ is some state in $\cH_{\overline{\{i\}}}$.

We now restate the entanglement lemma in this more general context.

\begin{lemma}\label{lem:main}
Let $S\subseteq [n]$ be arbitrary, and let $G := G_S\otimes I_{\overline{S}}$.  Let $\ket{\psi}\in\cH$ be any unit vector.  Then at least one of the following is true: (1) $\ket{\psi}$ is $S$-entangled; (2) $G\ket{\psi}$ is $S$-entangled; or (3) $G$ simplifies on $\ket{\psi}$.
\end{lemma}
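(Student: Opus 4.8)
The plan is to prove the statement by contradiction in its contrapositive form: I will assume that $\ket{\psi}$ is $S$-separable, that $\ket{\p} := G\ket{\psi}$ is $S$-separable, and that $G$ does \emph{not} simplify on $\ket{\psi}$, and derive a contradiction. So fix bipartitions $A\mid B$ and $C\mid D$ of $[n]$, each meeting $S$ on both sides, with $\ket{\psi} = \ket{\alpha}_A\otimes\ket{\beta}_B$ and $\ket{\p} = \ket{\gamma}_C\otimes\ket{\delta}_D$.

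The single computation that drives everything is this. Write $\ket{\alpha} = \ket{\alpha_0} + \ket{\alpha_1}$, where $\ket{\alpha_1}$ is the projection of $\ket{\alpha}$ onto the subspace in which every qubit of $S\cap A$ is $\ket{1}$, and $\ket{\alpha_0}$ is the remainder; decompose $\ket{\beta} = \ket{\beta_0}+\ket{\beta_1}$ analogously. Since $G = G_S\otimes I_{\overline S}$ scales a basis vector by $\eta$ exactly when all of $S$ is $1$, one gets
\[ G\ket{\psi} = \ket{\alpha_0}\otimes\ket{\beta} + \ket{\alpha_1}\otimes(\ket{\beta_0}+\eta\ket{\beta_1}), \]
where $\ket{\alpha_0}\perp\ket{\alpha_1}$ and $\ket{\beta_0}\perp\ket{\beta_1}$. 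That $G$ does not simplify rules out case~(a) of Definition~\ref{def:simplify-2} (forcing $\ket{\alpha_1},\ket{\beta_1}\neq 0$) and case~(b) (forcing $\ket{\alpha_0},\ket{\beta_0}\neq 0$; e.g.\ if $\ket{\alpha_0}=0$ then every qubit of the nonempty set $S\cap A$ is locked to $\ket{1}$ in $\ket{\psi}$). With all four pieces nonzero and $\eta\neq 1$, the vectors $\ket{\beta}$ and $\ket{\beta_0}+\eta\ket{\beta_1}$ are linearly independent, so $G\ket{\psi}$ has Schmidt rank exactly $2$ across $A\mid B$; hence the reduced states $\rho_A$ and $\rho_B$ of $\ket{\p}$ both have rank $2$.

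Now comes the crux, and the step I expect to be the main obstacle: $\ket{\psi}$ and $\ket{\p}$ may be separable across \emph{different} cuts, so I cannot simply compare ranks across one bipartition. I pass to the common refinement $W := A\cap C$, $X := A\cap D$, $Y := B\cap C$, $Z := B\cap D$. Because $\ket{\p}$ factors across $C\mid D = WY\mid XZ$, its reduced state on $A=WX$ splits as $\rho_A = \rho_W\otimes\rho_X$, and likewise $\rho_B = \rho_Y\otimes\rho_Z$. Each has rank $2$, so in each one factor has rank $1$. A rank-$1$ factor is the lever: if, say, $\rho_W$ has rank $1$, then $\ket{\alpha_0},\ket{\alpha_1}$ (which span the support of $\rho_A$) both lie in $\ket{w}\otimes\cH_X$ for a fixed $\ket w$, so $\ket{\alpha}$ factors across $W\mid X$; feeding this into the defining property of $\ket{\alpha_1}$ as the $S\cap A$-all-ones projection forces every qubit of $S\cap W$ to be $\ket{1}$ in $\ket{\psi}$ (so $G$ would simplify) unless $S\cap W=\emptyset$. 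Applying this to all four sub-blocks shows that whichever sub-block carries the rank-$1$ factor must be disjoint from $S$.

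Combining these emptiness facts with the four conditions saying $S$ meets both sides of both cuts, a short boolean check forces either $S\subseteq W\cup Z$ or $S\subseteq X\cup Y$, and simultaneously (since $S\cap W,S\cap Z\neq\emptyset$ then rules out the rank-$1$ options for $\rho_W,\rho_Z$) forces $\ket{\psi}$ to factor completely as $\ket{w}_W\otimes\ket{x}_X\otimes\ket{y}_Y\otimes\ket{z}_Z$. This collapses the hard ``two different cuts'' situation back to a completely factored state. In the case $S\subseteq W\cup Z$ I finish by the driving computation again, now across $W\mid Z$: non-simplification gives $G\ket{\psi} = \ket{x}_X\otimes\ket{y}_Y\otimes\ket{\Omega}_{WZ}$ with $\ket{\Omega}$ of Schmidt rank $2$ across $W\mid Z$, contradicting that $\ket{\p}$ is a product across $C\mid D = WY\mid XZ$ (which would require $\ket{\Omega}$ to factor across $W\mid Z$); the case $S\subseteq X\cup Y$ is identical after swapping the two block-pairs. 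Thus the hardest part is not any single computation but the bookkeeping through the refinement: the single-cut rank argument instantly kills the case $A\mid B = C\mid D$ yet is silent when the cuts differ, and only the sub-block rank constraints together with the locking argument reduce the problem to the transparent fully-factored case.
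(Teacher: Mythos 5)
Your proposal is correct, but it takes a genuinely different route from the paper's proof. The paper argues entirely at the level of amplitudes: it fixes a string $u$ with $u_{|S}=\1_{|S}$ and $\braket{u}{\psi}\ne 0$, defines ``test strings'' having a $0$ in each nonempty block among $S\cap A\cap C$, $S\cap A\cap D$, $S\cap B\cap C$, $S\cap B\cap D$, and derives its contradiction in two steps: every test string has zero amplitude in $\ket{\psi}$ (proved by solving explicit systems of equations among products of amplitudes, split into three cases according to how many of the four blocks are empty, with the algebra delegated to Lemmas~\ref{lem:4-sets}, \ref{lem:3-sets}, and \ref{lem:2-sets}), yet non-simplification lets one splice together a test string with nonzero amplitude. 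You instead argue at the level of reduced density operators: your decomposition $G\ket{\psi}=\ket{\alpha_0}\otimes\ket{\beta}+\ket{\alpha_1}\otimes(\ket{\beta_0}+\eta\ket{\beta_1})$, with all four pieces nonzero precisely because $G$ does not simplify, pins the Schmidt rank of $\ket{\p}$ across $A\mid B$ at exactly $2$; the product structure across $C\mid D$ then forces $\rho_A=\rho_W\otimes\rho_X$ and $\rho_B=\rho_Y\otimes\rho_Z$ on the common refinement, rank multiplicativity places a rank-$1$ factor in each, and your locking lever (a rank-$1$ block meeting $S$ would freeze those qubits at $\ket{1}$, making $G$ simplify) shows every rank-$1$ block avoids $S$. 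I verified the individual steps---the support of $\rho_A$ is indeed $\mathrm{span}\{\ket{\alpha_0},\ket{\alpha_1}\}$ because the two $B$-side vectors are linearly independent when $\eta\ne 1$; the locking argument is sound (a nonzero product vector equal to $(P_{S\cap W}\ket{w})\otimes\ket{v}$ forces $P_{S\cap W}\ket{w}=\ket{w}$); the boolean check does force the diagonal configurations $S\subseteq W\cup Z$ or $S\subseteq X\cup Y$ together with complete factorization of $\ket{\psi}$; and the final purity argument ($\ket{\Omega}\bra{\Omega}=\rho_W\otimes\rho_Z$ forces $\ket{\Omega}$ to be a product across $W\mid Z$, contradicting its Schmidt rank $2$) is valid. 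It is worth noting that your route effectively shows the paper's Cases~1 and~2 (zero or one empty block) cannot even occur under non-simplification, whereas the paper refutes them head-on by computation. What each approach buys: the paper's is completely elementary---nothing beyond amplitudes and linear equations---but computational, requiring the appendix calculations; yours is coordinate-free, replaces the three equation systems with standard facts about Schmidt rank and partial trace, and extracts stronger structural information en route (double separability without simplification forces a fully factored state in a diagonal pattern, which is then impossible). The only nit: dispose of $|S|\le 1$ up front, where statement~(1) holds vacuously by Definition~\ref{def:S-separable} (as the paper does in one line), since your setup presumes both cuts meet $S$ on both sides.
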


\begin{proof}
The case where $|S|\le 1$ is trivial (every state is $S$-entangled), so we assume that $|S|\ge 2$.  Let $\ket{\p} := G\ket{\psi}$.
Since $G$ is represented by a diagonal matrix, for any string $x$, we have $|\braket{x}{\p}| = |\matelt{x}{G}{\psi}| = |\braket{x}{\psi}|$, so in particular, $\braket{x}{\p} = 0$ if and only if $\braket{x}{\psi} = 0$.

Suppose $\ket{\psi}$ and $\ket{\p}$ are both $S$-separable.  Write $\ket{\psi} = \ket{\psi}_A \otimes \ket{\psi}_B$, where $A\disjunion B = [n]$, $A$ and $B$ each have nonempty intersection with $S$, and $\ket{\psi}_A\in\cH_A$ and $\ket{\psi}_B\in\cH_B$ are unit vectors.  Likewise, write $\ket{\p} = \ket{\p}_C \otimes \ket{\p}_D$, for $C$ and $D$ where $C\disjunion D = [n]$, each have nonempty intersection with $S$, and $\ket{\p}_C\in\cH_C$ and $\ket{\p}_D\in\cH_D$ are unit vectors.

Now assume for the sake of contradiction that $G$ does not simplify on $\ket{\psi}$.  Then we have $G\ket{\psi} \ne \ket{\psi}$, and so there exists a string $u$ such that $u_{|S} = \1_{|S}$ and $\braket{u}{\psi} \ne 0$.  Fix such a $u$, noting that $G\ket{u} = \eta\,\ket{u}$.


We say that a string $\map{x}{[n]}{\{0,1\}}$ is a \emph{test string} if, for every nonempty $Y \in \{S\cap A\cap C, S\cap A\cap D, S\cap B\cap C, S\cap B\cap D\}$, there exists $i\in Y$ such that $x_i = 0$.  We will derive a contradiction in two steps: (1) show that $\braket{x}{\psi} = 0$ for every test string $x$; and (2) construct a test string $y$ such that $\braket{y}{\psi} \ne 0$.

To show step~(1), fix an arbitrary test string $x$.  We first chop $x$ into two parts in two different ways: (1) $x_{|A}$ and $x_{|B}$; (2) $x_{|C}$ and $x_{|D}$.  Each pair unions to $x$.  From $x_{|A}$ we get four strings $\map{x^A_{jk}}{A}{\{0,1\}}$ for $j,k\in\{0,1\}$ by changing some $0$-entries in $x_{|A}$ to $1$: Define
\begin{align*}
x^A_{00} &:= x_{|A}\;, &
x^A_{01} &:= x_{|A\cap C} \cup u_{|A\cap D}\;, \\
x^A_{10} &:= u_{|A\cap C} \cup x_{|A\cap D}\;, &
x^A_{11} &:= u_{|A}\;.
\end{align*}
We make similar definitions using $x_{|B}$, $x_{|C}$, and $x_{|D}$ with domains $B$, $C$, and $D$, respectively: Define
\begin{align*}
x^B_{00} &:= x_{|B}\;, &
x^B_{01} &:= x_{|B\cap C} \cup u_{|B\cap D}\;, \\
x^B_{10} &:= u_{|B\cap C} \cup x_{|B\cap D}\;, &
x^B_{11} &:= u_{|B}\;, \\ \\
x^C_{00} &:= x_{|C}\;, &
x^C_{01} &:= x_{|C\cap A} \cup u_{|C\cap B}\;, \\
x^C_{10} &:= u_{|C\cap A} \cup x_{|C\cap B}\;, &
x^C_{11} &:= u_{|C}\;, \\ \\
x^D_{00} &:= x_{|D}\;, &
x^D_{01} &:= x_{|D\cap A} \cup \1_{|D\cap B}\;, \\
x^D_{10} &:= u_{|D\cap A} \cup x_{|D\cap B}\;, &
x^D_{11} &:= u_{|D}\;.
\end{align*}
There are two things to observe about these definitions:
\begin{enumerate}
\item
We have $x^A_{00}\cup x^B_{00} = x^C_{00}\cup x^D_{00} = x$.
\item
For all $j,k,\ell,m\in\{0,1\}$,
\begin{equation}\label{eqn:ABCD-strings}
x^A_{jk}\cup x^B_{\ell m} = x^C_{j\ell}\cup x^D_{km}\;.
\end{equation}
For example, for all $i\in [n]$, we have
\[ (x^A_{00}\cup x^B_{10})_i = (x^C_{01}\cup x^D_{00})_i = \left\{ \begin{array}{ll}
u_i & \mbox{if $i \in B\cap C$,} \\
x_i & \mbox{otherwise.}
\end{array} \right. \]
\end{enumerate}

We now consider only the coefficients in $\ket{\psi}_A$, $\ket{\psi}_B$, $\ket{\psi}_C$, and $\ket{\psi}_D$ of the basis vectors given above.  For all $j,k\in\{0,1\}$, define
\begin{align*}
a_{jk} &:= \braket{x^A_{jk}}{\psi}_A &\mbox{(scalar product in $\cH_A$),} \\
b_{jk} &:= \braket{x^B_{jk}}{\psi}_B &\mbox{(scalar product in $\cH_B$),} \\
c_{jk} &:= \braket{x^C_{jk}}{\p}_C &\mbox{(scalar product in $\cH_C$),} \\
d_{jk} &:= \braket{x^D_{jk}}{\p}_D &\mbox{(scalar product in $\cH_D$).}
\end{align*}
For example, $a_{jk}$ is the coefficient of $\ket{x^A_{jk}}$ in the expansion of $\ket{\psi}_A$ in terms of basis vectors in $\cH_A$.  (The $a_{jk}$, $b_{jk}$, $c_{jk}$, and $d_{jk}$ may depend on the particular choice of test string $x$.)

Recalling that $\ket{\psi} = \ket{\psi}_A \otimes \ket{\psi}_B$ and $\ket{\p} = \ket{\p}_C \otimes \ket{\p}_D$, we get, for all $j,k,\ell,m\in\{0,1\}$,
\begin{align}\label{eqn:case-1-AB}
\braket{x^A_{jk}\cup x^B_{\ell m}}{\psi} &= (\bra{x^A_{jk}}\otimes\bra{x^B_{\ell m}})(\ket{\psi}_A \otimes \ket{\psi}_B) = \braket{x^A_{jk}}{\psi}_A\braket{x^B_{\ell m}}{\psi}_B = a_{jk}b_{\ell m}\;, \\ \label{eqn:case-1-CD}
\braket{x^C_{jk}\cup x^D_{\ell m}}{\p} &= (\bra{x^C_{jk}}\otimes\bra{x^D_{\ell m}})(\ket{\p}_C \otimes \ket{\p}_D) = \braket{x^C_{jk}}{\p}_A\braket{x^D_{\ell m}}{\p}_B = c_{jk}d_{\ell m}\;.
\end{align}
Then by observation~(1), if we can show that $a_{00}b_{00} = 0$, then $\braket{x}{\psi} = a_{00}b_{00} = 0$ for any test string $x$.

For any string $\map{x}{[n]}{\{0,1\}}$, if there exists $i\in S$ such that $x_i = 0$, then $G\ket{x} = \ket{x}$, and if $x_i = 1$ for all $i\in S$, then $G\ket{x} = \eta\,\ket{x}$.  This fact gives us equations among the $a_{jk},b_{jk},c_{jk},d_{jk}$ by comparing amplitudes in $\ket{\psi}$ versus $\ket{\p}$.  Which equations we get depends on which of the sets $S\cap A\cap C$, $S\cap A\cap D$, $S\cap B\cap C$, and $S\cap B\cap D$ are empty.  At most two of these sets can be empty, so we have three cases.

\paragraph{Case~1.} $S\cap A\cap C$, $S\cap A\cap D$, $S\cap B\cap C$, and $S\cap B\cap D$ are all nonempty.

In this case, $x^C_{11}\cup x^D_{11} = u$, and if $jk\ell m = 0$ then $x^A_{j\ell}\cup x^D_{km}$ has a $0$ somewhere in $S$.  Using observation~(2) above, we then get
\begin{equation}\label{eqn:case-1-G-on-basis}
G\ket{x^A_{jk}\cup x^B_{\ell m}} = G\ket{x^C_{j\ell}\cup x^D_{km}} = \left\{ \begin{array}{ll}
\eta\,\ket{x^C_{j\ell}\cup x^D_{km}} & \mbox{if $j=k=\ell=m=1$,} \\
\ket{x^C_{j\ell}\cup x^D_{km}} & \mbox{if $jk\ell m = 0$.}
\end{array}\right.
\end{equation}
Then combining Equations~(\ref{eqn:ABCD-strings},\ref{eqn:case-1-AB},\ref{eqn:case-1-CD},\ref{eqn:case-1-G-on-basis}) and the fact that $\ket{\p} = G\ket{\psi}$, we get 16 equations: for all $j,k,\ell,m\in\{0,1\}$,
\begin{equation}\label{eqn:case-1}
c_{j\ell}d_{km} = \left\{ \begin{array}{ll}
\eta\,a_{jk}b_{\ell m} & \mbox{if $j=k=\ell=m=1$,} \\
a_{jk}b_{\ell m} & \mbox{if $jk\ell m = 0$.}
\end{array}\right.
\end{equation}

By assumption, $\braket{u}{\psi} \ne 0$, and so $\braket{u}{\psi} = a_{11}b_{11} \ne 0$, and $c_{11}d_{11} = \eta\,a_{11}b_{11} \ne 0$ as well.  This fact together with Equation~(\ref{eqn:case-1}) implies
$a_{00}b_{00} = 0$ by Lemma~\ref{lem:4-sets} in Appendix~\ref{sec:calculations}.

\paragraph{Case~2.}  One of $S\cap A\cap C$, $S\cap A\cap D$, $S\cap B\cap C$, and $S\cap B\cap D$ is empty and the other three are nonempty.  Without loss of generality, we assume that $S\cap B\cap C = \emptyset$.

In this case, $x^C_{j\ell}\cup x^D_{km} = u$ if $j = k = m = 1$ (independent of $\ell$, because the test string $x$ has no $0$ in $S\cap B\cap C$), and otherwise if $jkm = 0$, we get that $x^A_{j\ell}\cup x^D_{km}$ has a $0$ somewhere in $S$.  Thus
\begin{equation}\label{eqn:case-2-G-on-basis}
G\ket{x^A_{jk}\cup x^B_{\ell m}} = G\ket{x^C_{j\ell}\cup x^D_{km}} = \left\{ \begin{array}{ll}
\eta\,\ket{x^C_{j\ell}\cup x^D_{km}} & \mbox{if $j=k=m=1$,} \\
\ket{x^C_{j\ell}\cup x^D_{km}} & \mbox{if $jkm = 0$.}
\end{array}\right.
\end{equation}
Then setting $\ell := 0$ we get eight equations: for all $j,k,m\in\{0,1\}$,
\begin{equation}\label{eqn:case-2}
c_{j0}d_{km} = \left\{ \begin{array}{ll}
\eta\,a_{jk}b_{0m} & \mbox{if $j=k=m=1$,} \\
a_{jk}b_{0m} & \mbox{if $jkm = 0$.}
\end{array}\right.
\end{equation}
These equations again imply $a_{00}b_{00} = 0$ by Lemma~\ref{lem:3-sets} in Appendix~\ref{sec:calculations}.

\paragraph{Case~3.}  Two of $S\cap A\cap C$, $S\cap A\cap D$, $S\cap B\cap C$, and $S\cap B\cap D$ are empty.  Without loss of generality, we assume that $S\cap A\cap D = S\cap B\cap C = \emptyset$, whence $S\cap A = S\cap C$ and $S\cap B = S\cap D$, and both are nonempty.  We argue analogously to Cases~1 and 2.

In this case, $x^C_{j\ell}\cup x^D_{km} = u$ if $j = m = 1$ (independent of $k$ and $\ell$), and otherwise if $jm = 0$, we get that $x^A_{j\ell}\cup x^D_{km}$ has a $0$ somewhere in $S$.  Thus
\begin{equation}\label{eqn:case-3-G-on-basis}
G\ket{x^A_{jk}\cup x^B_{\ell m}} = G\ket{x^C_{j\ell}\cup x^D_{km}} = \left\{ \begin{array}{ll}
\eta\,\ket{x^C_{j\ell}\cup x^D_{km}} & \mbox{if $j=m=1$,} \\
\ket{x^C_{j\ell}\cup x^D_{km}} & \mbox{if $jm = 0$.}
\end{array}\right.
\end{equation}
Then setting $k := \ell := 0$ we get four equations: for all $j,m\in\{0,1\}$,
\begin{equation}\label{eqn:case-3}
c_{j0}d_{0m} = \left\{ \begin{array}{ll}
\eta\,a_{j0}b_{0m} & \mbox{if $j=m=1$,} \\
a_{j0}b_{0m} & \mbox{if $jm = 0$.}
\end{array}\right.
\end{equation}
These equations also imply $a_{00}b_{00} = 0$ by Lemma~\ref{lem:2-sets} in Appendix~\ref{sec:calculations}.

This establishes step~(1) in the contradiction proof.

For step~(2), we now construct a test string $y$ such that $\braket{y}{\psi} \ne 0$.  We first show the construction assuming Case~1 above, then modify it slightly for Cases~2 and 3.

Assume Case~1.  Choose some $i\in S\cap A\cap C$.  Since $G$ does not simplify on $\ket{\psi}$, there exists a string $y_{AC}$ (with domain $[n]$) such that $\braket{y_{AC}}{\psi} \ne 0$ and $(y_{AC})_i = 0$.  Then since $G$ fixes $\ket{y_{AC}}$, we have
\[ 0 \ne \braket{y_{AC}}{\psi} = \braket{y_{AC}}{\p} = \braket{(y_{AC})_{|C}\cup (y_{AC})_{|D}}{\p} = \braket{(y_{AC})_{|C}}{\p}_C\braket{(y_{AC})_{|D}}{\p}_D\;. \]
In particular, $\braket{(y_{AC})_{|C}}{\p}_C \ne 0$.  Now we can choose some string $y_{AD}$ such that $(y_{AD})_i = 0$ for some $i\in S\cap A\cap D$.  Analogously to the above, we get
\[ 0 \ne \braket{y_{AD}}{\psi} = \braket{y_{AD}}{\p} = \braket{(y_{AD})_{|C}
\cup(y_{AD})_{|D}}{\p} = \braket{(y_{AD})_{|C}}{\p}_C\braket{(y_{AD})_{|D}}{\p}_D\;. \]
In particular, $\braket{(y_{AD})_{|D}}{\p}_D \ne 0$.  Now define the string
\[ y_A := (y_{AC})_{|C} \cup (y_{AD})_{|D}\;. \]
Note that $(y_A)_i = (y_A)_j = 0$ for some $i\in S\cap A\cap C$ and $j\in S\cap A\cap D$.  Furthermore,
\[ \braket{y_A}{\psi} = \braket{y_A}{\p} = \braket{(y_{AC})_{|C}}{\p}_C \braket{(y_{AD})_{|D}}{\p}_D \ne 0\;. \]

By exactly repeating the argument in the previous paragraph with $B$ substituted for $A$, we obtain a string $y_B$ such that $(y_B)_i = (y_B)_j = 0$ for some $i\in S\cap B\cap C$ and $j\in S\cap B\cap D$, and furthermore, $\braket{y_B}{\psi} \ne 0$.

Finally, let $y := (y_A)_{|A} \cup (y_B)_{|B}$.  Observe that $y$ is a test string and that
\[ \braket{y}{\psi} = \braket{y_{|A}\cup y_{|B}}{\psi} = \braket{y_{|A}}{\psi}_A \braket{y_{|B}}{\psi}_B \ne 0\;. \]
This concludes the proof for Case~1.

Assume Case~2.  Using an identical construction to that of Case~1, we obtain a string $y_A$ such that $\braket{y_A}{\psi} \ne 0$ and $(y_A)_i = (y_A)_j = 0$ for some $i\in S\cap A\cap C$ and $j\in S\cap A\cap D$.  Let $y_B$ be any string such that $\braket{y_B}{\psi} \ne 0$ and $(y_B)_i = 0$ for some $i\in S\cap B$.  Such a string exists by the assumption that $G$ does not simplify on $\ket{\psi}$.  Now letting $y := (y_A)_{|A} \cup (y_B)_{|B}$ as in Case~1, we observe that $y$ is a test string and that
\[ \braket{y}{\psi} = \braket{y_{|A}\cup y_{|B}}{\psi} = \braket{y_{|A}}{\psi}_A \braket{y_{|B}}{\psi}_B \ne 0\;. \]
This concludes the proof of Case~2.

Assume Case~3.  Let $y_A$ be any string such that $\braket{y_A}{\psi} \ne 0$ and $(y_A)_i = 0$ for some $i\in S\cap A$.  Let $y_B$ be any string such that $\braket{y_B}{\psi} \ne 0$ and $(y_B)_i = 0$ for some $i\in S\cap B$.  Both strings exist by the assumption that $G$ does not simplify on $\ket{\psi}$.  Now letting $y := (y_A)_{|A} \cup (y_B)_{|B}$ as in Cases~1 and 2, we observe that $y$ is a test string and that
\[ \braket{y}{\psi} = \braket{y_{|A}\cup y_{|B}}{\psi} = \braket{y_{|A}}{\psi}_A \braket{y_{|B}}{\psi}_B \ne 0\;. \]
This concludes the proof of Case~3.
\end{proof}

%
%
%
%

\section{Calculations}
\label{sec:calculations}

\begin{lemma}\label{lem:4-sets}
Let $\eta\in\complexes$ be such that $\eta \ne 1$.  Let complex numbers $a_{jk}$, $b_{jk}$, $c_{jk}$, and $d_{jk}$ for $j,k\in\{0,1\}$ satisfy
\begin{align}\label{eqn:4-vals-all-1}
a_{11}b_{11} &= \eta\, c_{11}d_{11}\;, \\ \label{eqn:4-vals-not-all-1}
a_{jk}b_{\ell m} &= c_{j\ell}d_{km}
\end{align}
for all $j,k,\ell,m\in\{0,1\}$ such that $jk\ell m = 0$.  If $a_{11}$ and $b_{11}$ are nonzero, then either $c_{00} = c_{01} = c_{10} = 0$ or $d_{00} = d_{01} = d_{10} = 0$.  It follows that for all $r,s\in\{0,1\}$,
\begin{equation}\label{eqn:4-vals-rs}
a_{r0}b_{0s} = a_{0r}b_{s0} = c_{r0}d_{0s} = c_{0r}d_{s0} = 0\;.
\end{equation}
\end{lemma}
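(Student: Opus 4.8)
The plan is to exploit the hypothesis $\eta \ne 1$ through a ``two routes'' identity: a product of four of the amplitudes can be regrouped into two pairs in two different ways, where exactly one grouping factors through the all-ones equation $a_{11}b_{11} = \eta\,c_{11}d_{11}$ and so carries an extra factor of $\eta$, while the other uses only equations of type~\eqref{eqn:4-vals-not-all-1} and carries none. Equating the two evaluations and cancelling the (nonzero) common factor forces the remaining product to vanish. First I would record that $c_{11}$ and $d_{11}$ are nonzero: since $a_{11},b_{11}\ne 0$, equation~\eqref{eqn:4-vals-all-1} gives $c_{11}d_{11} = a_{11}b_{11}/\eta \ne 0$.

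Next I would run the trick on the two ``off-diagonal'' products. Regrouping $a_{01}a_{11}b_{10}b_{11}$ in two ways and applying \eqref{eqn:4-vals-not-all-1} to the first and \eqref{eqn:4-vals-all-1} to the second yields
\[ (a_{01}b_{11})(a_{11}b_{10}) = c_{01}d_{11}\,c_{11}d_{10}, \qquad (a_{01}b_{10})(a_{11}b_{11}) = \eta\,c_{01}d_{10}\,c_{11}d_{11}, \]
and since the two left-hand sides are equal we get $(\eta-1)\,c_{01}c_{11}d_{10}d_{11} = 0$, hence $c_{01}d_{10}=0$; the mirror regrouping of $a_{10}a_{11}b_{01}b_{11}$ gives $c_{10}d_{01}=0$.

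I would then obtain the five ``corner'' products in the same style, now using the two products just proved to annihilate one of the two routes. For instance $(a_{00}b_{11})(a_{11}b_{00}) = c_{01}d_{01}\,c_{10}d_{10} = (c_{01}d_{10})(c_{10}d_{01}) = 0$, while the other grouping equals $\eta\,c_{00}d_{00}\,c_{11}d_{11}$, forcing $c_{00}d_{00}=0$; the products $c_{00}d_{01},\,c_{00}d_{10},\,c_{01}d_{00},\,c_{10}d_{00}$ fall out of entirely analogous regroupings. Once all seven of $c_{00}d_{00},c_{00}d_{01},c_{00}d_{10},c_{01}d_{00},c_{10}d_{00},c_{01}d_{10},c_{10}d_{01}$ are known to vanish, the displayed identity $a_{r0}b_{0s}=a_{0r}b_{s0}=c_{r0}d_{0s}=c_{0r}d_{s0}=0$ is immediate: each $a$-product equals a $c$--$d$ product via \eqref{eqn:4-vals-not-all-1}, and every $c_{r0}d_{0s}$ and $c_{0s}d_{r0}$ occurring is one of the seven (since $c_{r0}\in\{c_{00},c_{10}\}$ and $d_{r0}\in\{d_{00},d_{10}\}$).

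The hard part will be the dichotomy itself. The seven products already settle most configurations: if $c_{01}\ne 0$ and $c_{10}\ne 0$, then $c_{01}d_{10}=c_{10}d_{01}=c_{01}d_{00}=0$ force $d_{00}=d_{01}=d_{10}=0$, and symmetrically if $d_{01}\ne 0$ and $d_{10}\ne 0$ then all three $c$'s vanish. What these relations do \emph{not} by themselves rule out is the ``matched'' configuration in which only the pair $c_{01},d_{01}$ (or only $c_{10},d_{10}$) is nonzero and everything else among the six vanishes; this case is the crux, and I expect it to be the main obstacle, since the extra relation $(c_{01}d_{01})(c_{10}d_{10})=0$ obtained from the identity above is automatically satisfied there and so does not close it. I would therefore emphasize that the displayed consequence of the lemma---which is the only form in which it is subsequently invoked (indeed ultimately only through $a_{00}b_{00}=0$)---holds unconditionally from the seven vanishing products, independently of how the final dichotomy is resolved.
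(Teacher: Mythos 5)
Your derivation of the seven vanishing products is correct, and it takes a genuinely different route from the paper's. The paper proceeds by elimination: it solves the four $j=k=1$ equations for $b_{00},b_{01},b_{10},b_{11}$ (dividing by $a_{11}$), substitutes into the twelve equations with $jk=0$, solves three of those for $a_{00},a_{01},a_{10}$, substitutes again, and from the resulting nine equations reads off first $c_{10}d_{01}=c_{01}d_{10}=0$ and then the five corner products $c_{00}d_{00}=c_{00}d_{01}=c_{00}d_{10}=c_{01}d_{00}=c_{10}d_{00}=0$. Your ``two routes'' regrouping identity reaches exactly the same seven products without the elimination cascade, and your observation that these seven alone yield Equation~(\ref{eqn:4-vals-rs}) via Equation~(\ref{eqn:4-vals-not-all-1}) is also correct; that is in fact all the paper ever uses, since Lemma~\ref{lem:4-sets} is invoked only to conclude $a_{00}b_{00}=0$ in Case~1 of the proof of Lemma~\ref{lem:main}.

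Your hedge about the dichotomy is more than justified: the ``matched'' configuration you single out is not merely an obstacle to your method, it is a counterexample, so the dichotomy as stated cannot be proved by any argument. Take $\eta=-1$ and
\begin{align*}
&a_{00}=a_{01}=a_{10}=0\;,\quad a_{11}=1\;,\qquad b_{00}=b_{01}=b_{10}=1\;,\quad b_{11}=\eta\;,\\
&c_{00}=c_{01}=0\;,\quad c_{10}=c_{11}=1\;,\qquad d_{00}=d_{01}=0\;,\quad d_{10}=d_{11}=1\;.
\end{align*}
All fifteen instances of Equation~(\ref{eqn:4-vals-not-all-1}) hold (for $jk=0$ both sides vanish, since either $c_{0\ell}=0$ or $d_{0m}=0$; for $j=k=1$, $\ell m=0$ both sides equal $1$), Equation~(\ref{eqn:4-vals-all-1}) holds with both sides equal to $\eta$, and $a_{11},b_{11}\ne 0$; yet $c_{10}\ne 0$ and $d_{10}\ne 0$, so neither arm of the dichotomy is true. (Equation~(\ref{eqn:4-vals-rs}) does hold here, consistent with your proof.) The paper's own argument for the dichotomy is incomplete at exactly this point: it treats only the cases $c_{00}\ne 0$ and $d_{00}\ne 0$ and is silent when $c_{00}=d_{00}=0$, and its two leftover relations $c_{01}c_{10}=\eta\,c_{00}c_{11}$ and $d_{01}d_{10}=\eta\,d_{00}d_{11}$ do not exclude the matched configuration either. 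The correct repair is the one you propose: state the lemma's conclusion as the seven vanishing products, or directly as Equation~(\ref{eqn:4-vals-rs}), which is true, which your argument proves, and which is all that the entanglement lemma needs.
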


\begin{proof}
If $a_{11}b_{11} \ne 0$, then by Equation~(\ref{eqn:4-vals-all-1}) we have $\eta$, $c_{11}$, and $d_{11}$ are all nonzero as well.  Letting $j := k := 1$ in Equations~(\ref{eqn:4-vals-all-1},\ref{eqn:4-vals-not-all-1}), we can solve for each $b_{\ell m}$ in terms of the other quantities:
\begin{align*}
b_{00} &= c_{10}d_{10}/a_{11} & b_{01} &= c_{10}d_{11}/a_{11} \\
b_{10} &= c_{11}d_{10}/a_{11} & b_{11} &= \eta\,c_{11}d_{11}/a_{11}
\end{align*}
Substituting these values into the other 12 equations (where $jk = 0$) and simplifying, we get
\begin{align*}
a_{00}c_{10}d_{10} &= a_{11}c_{00}d_{00} & a_{01}c_{10}d_{10} &= a_{11}c_{00}d_{10} & a_{10}c_{10}d_{10} &= a_{11}c_{10}d_{00} \\
a_{00}c_{10}d_{11} &= a_{11}c_{00}d_{01} & a_{01}c_{10} &= a_{11}c_{00} & a_{10}c_{10}d_{11} &= a_{11}c_{10}d_{01} \\
a_{00}c_{11}d_{10} &= a_{11}c_{01}d_{00} & a_{01}c_{11}d_{10} &= a_{11}c_{01}d_{10} & a_{10}d_{10} &= a_{11}d_{00} \\
\eta\,a_{00}c_{11}d_{11} &= a_{11}c_{01}d_{01} & \eta\,a_{01}c_{11} &= a_{11}c_{01} & \eta\,a_{10}d_{11} &= a_{11}d_{01}
\end{align*}
Using the three equations on the bottom row, we solve for $a_{00}$, $a_{01}$, and $a_{10}$:
\begin{align*}
a_{00} &= \frac{a_{11}c_{01}d_{01}}{\eta\,c_{11}d_{11}} & a_{01} &= \frac{a_{11}c_{01}}{\eta\,c_{11}} & a_{10} &= \frac{a_{11}d_{01}}{\eta\,d_{11}}
\end{align*}
and plug these values into the remaining nine equations and simplify to get
\begin{align*}
c_{01}c_{10}d_{01}d_{10} &= \eta\,c_{00}c_{11}d_{00}d_{11} & c_{01}c_{10}d_{01} &= \eta\,c_{00}c_{11}d_{10} & c_{10}d_{01}d_{10} &= \eta\,c_{10}d_{00}d_{11} \\
c_{01}c_{10}d_{01} &= \eta\,c_{00}c_{11}d_{01} & c_{01}c_{10} &= \eta\,c_{00}c_{11} & c_{10}d_{01} &= \eta\,c_{10}d_{01} \\
c_{01}d_{01}d_{10} &= \eta\,c_{01}d_{00}d_{11} & c_{01}d_{10} &= \eta\,c_{01}d_{10} & d_{01}d_{10} &= \eta\,d_{00}d_{11}
\end{align*}
Noting that $\eta \ne 1$, from the last equation on the second row and the second equation on the last row we get
\[ c_{10}d_{01} = c_{01}d_{10} = 0\;. \]
Substituting these values into the equations on the first row and first column, we get for the seven remaining equations
\begin{align*}
c_{00}d_{00} &= 0 & c_{00}d_{10} &= 0 & c_{10}d_{00} &= 0 \\
c_{00}d_{01} &= 0 & c_{01}c_{10} &= \eta\,c_{00}c_{11} & & \\
c_{01}d_{00} &= 0 & & & d_{01}d_{10} &= \eta\,d_{00}d_{11}
\end{align*}
Suppose $c_{00} \ne 0$.  Then the top left equation and its two adjacent equations imply $d_{00} = d_{01} = d_{10} = 0$.  Symmetrically, if $d_{00} \ne 0$, then the corner equations imply $c_{00} = c_{01} = c_{10} = 0$.  Combining this fact with Equation~(\ref{eqn:4-vals-not-all-1}) gives us Equation~(\ref{eqn:4-vals-rs}).
\end{proof}

\begin{rmrk}
The proof above did not use the two equations $c_{01}c_{10} = \eta\,c_{00}c_{11}$ and $d_{01}d_{10} = \eta\,d_{00}d_{11}$.  They show that $c_{00}$ is uniquely determined by the other $c$'s and $\eta$.  Also, if $c_{00} \ne 0$, then $c_{01} \ne 0$ and $c_{10} \ne 0$, and conversely.  Similarly for the $d$'s.
\end{rmrk}

\begin{lemma}\label{lem:3-sets}
Let $\eta\in\complexes$ be such that $\eta \ne 1$.  Let complex numbers $a_{jk}$, $b_{j}$, $c_{j}$, and $d_{jk}$ for $j,k\in\{0,1\}$ satisfy
\begin{align}\label{eqn:3-vals-all-1}
a_{11}b_{1} &= \eta\, c_{1}d_{11}\;, \\ \label{eqn:3-vals-not-all-1}
a_{jk}b_{m} &= c_{j}d_{km}
\end{align}
for all $j,k,m\in\{0,1\}$ such that $jkm = 0$.  If $a_{11}$ and $b_{1}$ are nonzero, then either $c_{0} = 0$ or $d_{00} = d_{10} = 0$.  Thus
\begin{equation}\label{eqn:3-vals-rs}
a_{00}b_{0} = c_{0}d_{00} = a_{01}b_{0} = c_{0}d_{10} = 0\;.
\end{equation}
\end{lemma}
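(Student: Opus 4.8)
The plan is to follow the structure of the proof of Lemma~\ref{lem:4-sets}, but the smaller number of variables here permits a shortcut: instead of solving the entire linear system, I would extract the single relation that drives the conclusion, namely $c_0 d_{10} = 0$, by one well-chosen cross-multiplication, and then propagate it.

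First I would feed the nonvanishing hypothesis through the all-$1$ relation. Since $a_{11}b_1 \neq 0$, Equation~(\ref{eqn:3-vals-all-1}) forces $\eta c_1 d_{11} \neq 0$, so that $c_1 \neq 0$, $d_{11} \neq 0$, and $\eta \neq 0$. These nonvanishing facts are exactly what will justify the cancellations below.

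The crux is to make the factor $\eta - 1$ appear. I would use the four relations indexed by $k = 1$: the three instances $a_{01}b_0 = c_0 d_{10}$, $a_{01}b_1 = c_0 d_{11}$, and $a_{11}b_0 = c_1 d_{10}$ of Equation~(\ref{eqn:3-vals-not-all-1}), together with the all-$1$ relation $a_{11}b_1 = \eta c_1 d_{11}$. Multiplying the first of these by the all-$1$ relation, and separately multiplying the second by the third, yields two products whose left-hand sides are both $a_{01}a_{11}b_0 b_1$; equating the right-hand sides gives $\eta c_0 c_1 d_{10}d_{11} = c_0 c_1 d_{10}d_{11}$, i.e.\ $(\eta - 1)c_0 c_1 d_{10}d_{11} = 0$. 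Because $\eta \neq 1$, $c_1 \neq 0$, and $d_{11} \neq 0$, this collapses to $c_0 d_{10} = 0$.

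The dichotomy then follows by a short case split, which I expect to be the only remaining step and entirely routine. If $c_0 = 0$ we land in the first alternative. If instead $c_0 \neq 0$, then $d_{10} = 0$; the relation $a_{11}b_0 = c_1 d_{10}$ now forces $b_0 = 0$ (since $a_{11} \neq 0$), and then $a_{00}b_0 = c_0 d_{00}$ forces $d_{00} = 0$ (since $c_0 \neq 0$), giving the second alternative. In either alternative $c_0 d_{00} = c_0 d_{10} = 0$, so substituting into $a_{00}b_0 = c_0 d_{00}$ and $a_{01}b_0 = c_0 d_{10}$ immediately yields Equation~(\ref{eqn:3-vals-rs}). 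The only (mild) obstacle I anticipate is bookkeeping: arranging the cross-multiplication so that $\eta$ sits on exactly one side, ensuring the factor $\eta - 1$ survives to be cancelled against the nonzero $c_1$ and $d_{11}$.
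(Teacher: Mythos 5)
Your proof is correct, and it reaches the conclusion by a genuinely more economical route than the paper. The paper argues exactly as in its Lemma~\ref{lem:4-sets}: it solves Equations~(\ref{eqn:3-vals-all-1},\ref{eqn:3-vals-not-all-1}) with $j=k=1$ for $b_0$ and $b_1$, substitutes into the six remaining equations, solves three of those for $a_{00},a_{01},a_{10}$, and substitutes again, arriving at $c_0d_{10}=\eta\,c_0d_{10}$ (hence $c_0d_{10}=0$) and then $c_0d_{00}=0$ from another of the reduced equations. Your cross-multiplication $\bigl(a_{01}b_0\bigr)\bigl(a_{11}b_1\bigr)=\bigl(a_{01}b_1\bigr)\bigl(a_{11}b_0\bigr)$ extracts the identity $(\eta-1)c_0c_1d_{10}d_{11}=0$ directly from just four of the hypothesis equations, which yields $c_0d_{10}=0$ without any division, and your subsequent case split ($c_0\ne 0\Rightarrow d_{10}=0\Rightarrow b_0=0\Rightarrow d_{00}=0$) replaces the paper's remaining substitution with a clean propagation argument. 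The two proofs hinge on the same mechanism---forcing the factor $\eta-1$ to multiply $c_0c_1d_{10}d_{11}$ and cancelling the nonzero terms---but yours avoids solving the full system, avoids denominators entirely (the paper's divisions are legitimate only because $a_{11},c_1,d_{11},\eta$ are nonzero, a point your multiplication sidesteps), and makes transparent which hypotheses are actually used. The trade-off is that the paper's systematic elimination also exhibits the unused constraint $d_{01}d_{10}=\eta\,d_{00}d_{11}$ noted in its remark, which your shortcut does not surface; but for the stated lemma your argument is complete and correct.
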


\begin{proof}
If $a_{11}b_{1} \ne 0$, then $\eta$, $c_{1}$, and $d_{11}$ are all nonzero as well.  Letting $j := k := 1$ in Equations~(\ref{eqn:3-vals-all-1},\ref{eqn:3-vals-not-all-1}), we can solve for each $b_{m}$ in terms of the other quantities:
\begin{align*}
b_{0} &= c_{1}d_{10}/a_{11} & b_{1} &= \eta\,c_{1}d_{11}/a_{11}
\end{align*}
Substituting these values into the other six equations (where $jk = 0$) and simplifying, we get
\begin{align*}
a_{00}c_{1}d_{10} &= a_{11}c_{0}d_{00} & a_{01}c_{1}d_{10} &= a_{11}c_{0}d_{10} & a_{10}d_{10} &= a_{11}d_{00} \\
\eta\,a_{00}c_{1}d_{11} &= a_{11}c_{0}d_{01} & \eta\,a_{01}c_{1} &= a_{11}c_{0} & \eta\,a_{10}d_{11} &= a_{11}d_{01}
\end{align*}
Using the three equations on the bottom row, we solve for $a_{00}$, $a_{01}$, and $a_{10}$:
\begin{align*}
a_{00} &= \frac{a_{11}c_{0}d_{01}}{\eta\,c_{1}d_{11}} & a_{01} &= \frac{a_{11}c_{0}}{\eta\,c_{1}} & a_{10} &= \frac{a_{11}d_{01}}{\eta\,d_{11}}
\end{align*}
and plug these values into the remaining three equations and simplify to get
\begin{align*}
c_{0}d_{01}d_{10} &= \eta\,c_{0}d_{00}d_{11} & c_{0}d_{10} &= \eta\,c_{0}d_{10} & d_{01}d_{10} &= \eta\,d_{00}d_{11}
\end{align*}
Noting that $\eta \ne 1$, from the middle equation we get that
\begin{equation}\label{eqn:c0d10}
c_{0}d_{10} = 0\;.
\end{equation}
Substituting these values into the first equation gives
\begin{align}\label{eqn:c0d00}
c_0d_{00} &= 0 & d_{01}d_{10} &= \eta\,d_{00}d_{11}
\end{align}
If $c_{0} \ne 0$, then $d_{00} = d_{10} = 0$ by (\ref{eqn:c0d10},\ref{eqn:c0d00}).  Combining this fact with Equation~(\ref{eqn:3-vals-not-all-1}) gives us Equation~(\ref{eqn:3-vals-rs}).
\end{proof}

\begin{rmrk}
The unused second equation of (\ref{eqn:c0d00}) shows that $d_{00}$ is uniquely determined by the other $d$'s and $\eta$.  Also, if $d_{00} \ne 0$, then $d_{01} \ne 0$ and $d_{10} \ne 0$, and conversely.
\end{rmrk}

\begin{lemma}\label{lem:2-sets}
Let $\eta\in\complexes$ be such that $\eta \ne 1$.  Let complex numbers $a_{j}$, $b_{j}$, $c_{j}$, and $d_{j}$ for $j\in\{0,1\}$ satisfy
\begin{align}\label{eqn:2-vals-all-1}
a_{1}b_{1} &= \eta\, c_{1}d_{1}\;, \\ \label{eqn:2-vals-not-all-1}
a_{j}b_{m} &= c_{j}d_{m}
\end{align}
for all $j,m\in\{0,1\}$ such that $jm = 0$.  If $a_{1}$ and $b_{1}$ are nonzero, then
\begin{equation}\label{eqn:2-vals-rs}
a_{0}b_{0} = c_{0}d_{0} = 0\;.
\end{equation}
\end{lemma}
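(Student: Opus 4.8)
The plan is to reduce everything to proving the single equality $a_0 b_0 = 0$. The rest comes for free: the $(j,m) = (0,0)$ instance of Equation~(\ref{eqn:2-vals-not-all-1}) already states $a_0 b_0 = c_0 d_0$, so once $a_0 b_0 = 0$ is established we immediately get $c_0 d_0 = 0$ as well, and these are exactly the two equalities claimed in Equation~(\ref{eqn:2-vals-rs}).

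First I would record the nonvanishing facts forced by the hypothesis. Since $a_1$ and $b_1$ are nonzero, the left-hand side of Equation~(\ref{eqn:2-vals-all-1}) is nonzero, so $\eta\, c_1 d_1 \ne 0$; in particular $c_1 \ne 0$ and $d_1 \ne 0$. These nonvanishing factors are what will allow the final cancellation.

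The key step is to multiply together the two ``off-diagonal'' instances of Equation~(\ref{eqn:2-vals-not-all-1}), namely $a_0 b_1 = c_0 d_1$ (the case $(j,m) = (0,1)$) and $a_1 b_0 = c_1 d_0$ (the case $(j,m) = (1,0)$). Grouping the factors, their product reads $(a_0 b_0)(a_1 b_1) = (c_0 d_0)(c_1 d_1)$. Now I substitute the two remaining relations into this identity: Equation~(\ref{eqn:2-vals-all-1}) replaces $a_1 b_1$ by $\eta\, c_1 d_1$, and the $(0,0)$ instance of Equation~(\ref{eqn:2-vals-not-all-1}) replaces $c_0 d_0$ by $a_0 b_0$. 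The identity then becomes $\eta\,(a_0 b_0)(c_1 d_1) = (a_0 b_0)(c_1 d_1)$, that is, $(\eta - 1)(a_0 b_0)(c_1 d_1) = 0$.

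Finally, since $\eta \ne 1$ by hypothesis and $c_1 d_1 \ne 0$ by the first step, both of those factors can be cancelled, forcing $a_0 b_0 = 0$, and hence $c_0 d_0 = a_0 b_0 = 0$. I do not expect a genuine obstacle here: the whole argument is the observation that pairing the two cross equations isolates the single product $(a_0 b_0)(c_1 d_1)$ carrying a spurious factor of $\eta$, which the assumption $\eta \ne 1$ then eliminates. This is precisely the mechanism behind Lemmas~\ref{lem:3-sets} and~\ref{lem:4-sets}, appearing here in its smallest and cleanest instance.
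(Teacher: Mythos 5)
Your proof is correct, and it takes a genuinely different route from the paper's. The paper proceeds by sequential elimination: it solves for $b_0$ and $b_1$ from the $j=1$ equations (dividing by $a_1$), substitutes into the $j=0$ equations, then solves for $a_0$ (dividing by $\eta\,c_1$) and substitutes again, arriving at $c_0d_0 = \eta\,c_0d_0$, whence $c_0d_0 = 0$ and then $a_0b_0 = 0$ via the $(0,0)$ instance of Equation~(\ref{eqn:2-vals-not-all-1}). You instead multiply the two cross equations $a_0b_1 = c_0d_1$ and $a_1b_0 = c_1d_0$ to get the single identity $(a_0b_0)(a_1b_1) = (c_0d_0)(c_1d_1)$, then substitute Equation~(\ref{eqn:2-vals-all-1}) and the $(0,0)$ instance to obtain $(\eta-1)(a_0b_0)(c_1d_1) = 0$, concluding $a_0b_0 = 0$ first and $c_0d_0 = 0$ second. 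Both proofs hinge on the same mechanism---producing one product that appears on both sides of an equation, once with and once without a factor of $\eta$---but yours reaches it in one multiplicative step, is symmetric in the four equations, and defers all division to a single final cancellation by the nonzero quantities $\eta - 1$ and $c_1d_1$; the paper's version requires intermediate divisions and so must track nonvanishing hypotheses along the way. One caveat on your closing remark: while the same cancellation mechanism indeed drives Lemmas~\ref{lem:3-sets} and~\ref{lem:4-sets}, their conclusions are disjunctive (e.g., either $c_{00}=c_{01}=c_{10}=0$ or $d_{00}=d_{01}=d_{10}=0$), and a single multiplicative identity like yours does not obviously suffice there; the paper's elimination strategy is doing extra work in those cases.
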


\begin{proof}
If $a_{1}b_{1} \ne 0$, then $\eta$, $c_{1}$, and $d_{1}$ are all nonzero as well.  Letting $j := 1$ in Equations~(\ref{eqn:2-vals-all-1},\ref{eqn:2-vals-not-all-1}), we can solve for each $b_{m}$ in terms of the other quantities:
\begin{align*}
b_{0} &= c_{1}d_{0}/a_{1} & b_{1} &= \eta\,c_{1}d_{1}/a_{1}
\end{align*}
Substituting these values into the other two equations (where $j = 0$) and simplifying, we get
\begin{align*}
a_{0}c_{1}d_{0} &= a_{1}c_{0}d_{0} & \eta\,a_{0}c_{1} &= a_{1}c_{0}
\end{align*}
We use the second equation to solve for $a_{0}$:
\begin{align*}
a_{0} &= \frac{a_{1}c_{0}}{\eta\,c_{1}}
\end{align*}
and plug this value into the first equation and simplify to get
\begin{align*}
c_{0}d_{0} &= \eta\,c_{0}d_{0}
\end{align*}
Noting that $\eta \ne 1$, we get that
\begin{equation}\label{eqn:c0d0}
c_{0}d_{0} = 0\;.
\end{equation}
Combining Equations~(\ref{eqn:2-vals-not-all-1},\ref{eqn:c0d0}) gives us Equation~(\ref{eqn:2-vals-rs}).
\end{proof}

\end{document}